 \newtheorem{theorem}{Theorem}
 \newtheorem{lemma}[theorem]{Lemma}
\newcommand{\mc}[1]{\mathcal{#1}}
\newcommand{\Tr}{\mathrm{Tr}} 
\newcommand{\vertiii}[1]{{\left\vert\kern-0.25ex\left\vert\kern-0.25ex\left\vert #1 
    \right\vert\kern-0.25ex\right\vert\kern-0.25ex\right\vert}}
\newcommand{\ZZ}{{\mathbb Z}}
\renewcommand{\vec}[1]{\mathbf{#1}}
\definecolor{edoardo}{rgb}{1,0,0}
\newcommand{\tii}{Quantum Research Centre, Technology Innovation Institute (TII), Abu Dhabi}
\begin{document}

\title{Alleviating the quantum {B}ig-$M$ problem}

\begin{abstract}
A major obstacle for quantum optimizers is the reformulation of constraints as a quadratic unconstrained binary optimization (QUBO).
Current QUBO translators exaggerate the weight $M$ of the penalty terms. Classically known as the ``Big-$M$'' problem, the issue becomes even more daunting for quantum 
solvers, since it affects the physical energy scale. 
We take a systematic, encompassing look at the quantum big-$M$ problem, revealing NP-hardness in finding the optimal $M$ and establishing bounds on
the Hamiltonian spectral gap $\Delta$ as a function of the weight $M$, inversely related to the expected run-time of quantum solvers.
 We propose a practical translation algorithm,
based on SDP relaxation, that outperforms previous methods in numerical benchmarks.
Our algorithm gives values of $\Delta$ orders of magnitude greater, e.g.\ for portfolio optimization instances.
Solving such instances with an adiabatic algorithm on 6-qubits of an IonQ device, we observe significant advantages in time to solution and average solution quality.
Our findings are relevant to quantum and quantum-inspired solvers alike.


\end{abstract}

\author{Edoardo Alessandroni$^*$}
\affiliation{\tii}
\affiliation{SISSA — Scuola Internazionale Superiore di Studi Avanzati, Trieste, Italy}
\email{Corresponding author email: ealessan@sissa.it}
\author{Sergi Ramos-Calderer}
\affiliation{\tii}
\affiliation{Departament de F\'isica Qu\`antica i Astrof\'isica and Institut de Ci\`encies del Cosmos (ICCUB), Universitat de Barcelona, Barcelona, Spain.}
\author{Ingo Roth}
\affiliation{\tii}
\author{Emiliano Traversi}
\affiliation{Department of Information Systems, Data Analytics and Operations, ESSEC Business School, Cergy-Pontoise, France}
\author{Leandro Aolita}
\affiliation{\tii}

\maketitle

\section{Introduction}
Quantum computing holds a great potential for speeding up combinatorial optimization \cite{abbas2023quantum}. 
From a distant-future perspective, the prospects are rooted in the fact that fault-tolerant quantum computers are envisioned to run quantum versions of state-of-the-art classical optimization algorithms more efficiently.
In fact, there is sound theoretical evidence that such quantum algorithms offer a quadratic asymptotic speed-up over their classical counterparts \citep{Montanaro, durr1996quantum,ambainis2019quantum}. 
In the short run, there is a direct relation between the ground state of physical systems and optimizations. The paradigmatic example is Ising models encoding quadratic \emph{unconstrained} binary optimization (QUBO) problems.
This has fueled a quest for ground-state preparation algorithms implementable on nearer-term quantum hardware. These include quantum annealing 
\citep{Farhi00,Albash_AQC,LangZielinskiFeld,Salatino_25,Geier_25,Hegde_2022}, quantum imaginary time evolution
\citep{VariationalQuITE19,motta2020,Nishi21,PW09, chowdhury2017,variationalGibbssampler20, Thais_QITE, Kyaw_2024}, and heuristics such as the quantum approximate optimization algorithm \citep{Farhi_QAOA, basso_et_al:LIPIcs.TQC.2022.7, He_QAOA_PO}. Moreover, apart from quantum solvers, the QUBO paradigm is giving rise to a variety of interesting quantum-inspired solvers as well \citep{Goto_simadiabatic, Kanao_simbifurcation, Mohseni_Isingmach}.

A prerequisite to apply such paradigm to more general quadratically constrained integer optimization problems is to recast them into an equivalent QUBO form.
Recently, automatic QUBO translators have appeared \citep{qiskit_doc, qubovert,zaman2021pyqubo}. 
The translation consists of lifting the constraints to penalty terms in the objective function.
To ensure that the solution to the reformulated (unconstrained) problem coincides with that of the original (constrained) one, the weight of the penalty terms---often denoted as $M$---has to be sufficiently large.
At the same time, choosing an excessively large $M$ causes an increase in run-time due to precision issues in rounding and truncation, even for classical solvers. 
In the classical optimization community this problem is referred to as the \emph{Big-$M$ problem}. 

In contrast, quantum QUBO solvers are closer in spirit to \emph{analog computing devices}. 
There, the value of $M$ directly affects the energy scale of the Hamiltonian whose ground state encodes the solution. 
When performing controlled computations with an actual physical quantum system its admissible energies are limited.
For ground-state preparation schemes these physical limitation eventually also restricts the energy scale of the encoding Hamiltonian \citep{karimi2019practical,Harwood_routing,azad2022solving}.
More precisely, the penalty terms tend to have the undesired side effect of decreasing the spectral gap of the Hamiltonian. 
As a consequence, the precision required to resolve the states, and, hence, also the run-time, increases.
A general rule of thumb is to choose $M$ as small as possible, 
but doing so while still successfully enforcing the constraints is highly non-trivial. 
In fact, the known computationally-efficient Big-$M$ recipes tend to largely over-estimate the required value \citep{Harwood_routing, Leonidas_vehiclerouting, qiskit_doc}. 
Clearly, an efficient QUBO translator with improved spectral properties is highly desirable. Moreover, a formalization of the \emph{quantum big-M problem} as a fundamental concept between quantum physics and computer science is missing too. 
A general framework should address key aspects such as how to quantify the Big-$M$ problem in terms of its impact on quantum solvers or the computational complexity of QUBO reformulations.

\begin{figure*}[t!]
    \includegraphics[width = \textwidth]{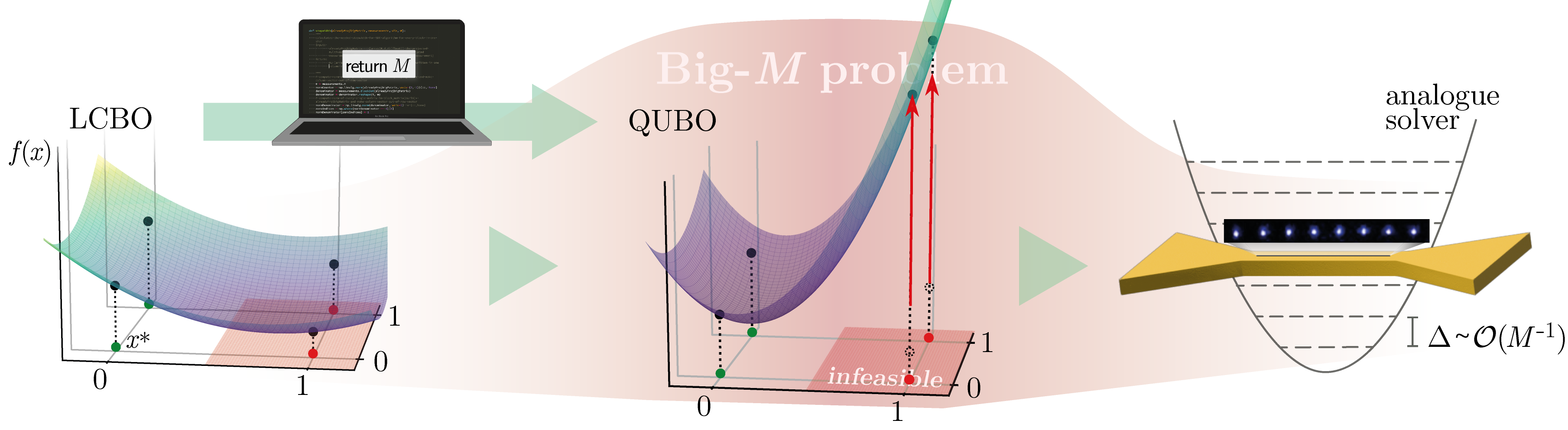}
    \centering
    \caption{
    In a linearly-constrained binary quadratic optimization (LCBO, left) with optimal point $x^\ast$, unfeasible points (red-shaded sub-domain) are excluded from the feasible set (green) via hard constraints.
    For a reformulation as a quadratic unconstrained binary optimization (QUBO, center), the native format for quantum solvers, a penalty term that vanishes on the feasible region with weight $M>0$ is added to the objective function.
    The reformulation is exact ($\vec x^\ast$ remains optimal) if $M$ is sufficiently large, lifting the objective values of all unfeasible points above $f(\vec x^\ast)$.
    However, the penalty term affects the physical energy scale; the larger $M$, the smaller the spectral gap $\Delta$ of the Hamiltonian encoding the reformulation (right). This has a detrimental effect on the runtime of exact solvers, as well as on the quality of the solution of approximate solvers. 
    While NP-hard in general (as we prove), we present an efficient classical strategy to find ‘good’ choices of $M$.
    }
\label{fig:main}
\end{figure*}
Here, we fill in this gap. We develop a theory
of the quantum big-M problem and its impact on the spectral gap $\Delta$.
We start with rigorous definitions for the notions of optimal $M$ and exact QUBO reformulations. 
We prove that finding the optimal $M$ is NP-hard and establish relevant upper
bounds on the Hamiltonian spectral gap $\Delta$, both in terms of the original gap and of $M$. One of these bounds formalizes the intuition that $\Delta=\mathcal{O}(M^{-1})$.
Most importantly, we present a universal QUBO reformulation method with improved spectral properties. 
This is a simple but remarkably-powerful heuristic recipe for $M$, based on standard SDP relaxation \cite{goemans1995improved}. 
We perform exhaustive numerical tests on sparse linearly constrained binary optimizations, set partition problems, and portfolio optimizations from real S\&P 500 data. 
For all three classes, we systematically obtain values of $M$ one order of magnitude smaller and of $\Delta$ from one to two orders of magnitude larger than with state-of-the-art methods, with particularly promising results for portfolio optimization.
In addition, in a proof-of-principle experiment, we solved 6-qubit PO instances with a Trotterized adiabatic algorithm deployed on IonQ's trapped-ion device Aria-1. For the small system size one remains approximately adiabatic with the permissible circuit depth.
We observed that our reformulation increases the probability of measuring the optimal solution by over an order of magnitude and improves the average approximation ratio.
Our findings demonstrate crucial advantages of the proposed optimized QUBO reformulation over the currently known recipes.

\section{Results}
\paragraph*{The quantum Big-$M$ problem---.} Our starting point is a \emph{linearly-constrained binary quadratic optimization (LCBO) problem} 
with $n$ binary decision variables and $m$ constraints,
\begin{equation}\tag{P}\label{lcBQP}
        \operatorname*{minimize}_{\vec x \in \{0,1\}^n}\ f(\vec x) = \vec x^t Q \vec x 
        \quad\operatorname{subject\ to}\quad A\, \vec x = \vec b ,
\end{equation}
specified in terms of $Q \in \ZZ^{n\times n}$, 
$A \in \ZZ^{m\times n}$, and $\vec b \in \ZZ^m$. 
Note that a general polynomially-constrained polynomial optimization problem with integer variables
can always be cast into a linearly-constrained binary quadratic optimization problem of the form \eqref{lcBQP} by standard \emph{gadgets}, as summarized in Supplementary Information \ref{app:gadgets}.
Furthermore, for the sake of clarity, we consider throughout \emph{exact optimization solvers}.  
Clearly, near-term quantum optimization solvers are envisioned to be approximate solvers.
However, our discussion can be extended to approximate solvers too, e.g.\ by considering all admissible approximate solutions as optimal points of problem \eqref{lcBQP}.

To arrive at a QUBO formulation of \eqref{lcBQP}, the best-known strategy (see Fig.~\ref{fig:main}) is to promote the constraints to a quadratic penalty term in the objective function using a suitable constant weight $M>0$. The resulting QUBO reads,
\begin{equation}\tag{P$_{\!M}$}\label{QUBO}
    \operatorname*{minimize}_{\vec x \in \{0,1\}^n}\ \vec x^t Q \vec x + M(A\vec x - \vec b)^2\,.
\end{equation}
We say that \eqref{QUBO} is an \emph{exact reformulation} of \eqref{lcBQP} if their optimal points coincide. 
The penalty term in \eqref{QUBO} vanishes for every feasible point. 
To arrive at an exact reformulation, $M$
has to be chosen large enough for every unfeasible point of \eqref{lcBQP} to have a greater objective value than the original optimum.  
Denoting by $\vec x^\ast$ an optimal point of \eqref{lcBQP}, we have an exact reformulation if and only if there exists a \emph{gap} $\delta > 0$ s.t.\ 
\begin{equation}\label{eq:exact_reformulation condition}
     f(\vec x^\ast) + \delta \leq f(\vec x) + M (A\vec x - \vec b)^2\,
\end{equation}
for all unfeasible points $\vec x$.
There are simple choices of $M$ to ensure this condition, such as 
\begin{equation}\label{eq:Mell1}
    M_{\ell_1} = 
    \|Q\|_{\ell_1} + \delta\,,
\end{equation}
with the
vector $\ell_1$-norm being the sum of all absolute entries. 
Since $M_{\ell_1}$ can be computed in polynomial time, it follows that \eqref{lcBQP} and \eqref{QUBO} with $M = M_{\ell_1}$ are in the same complexity class. 
This choice of $M$ is common \citep{Harwood_routing, Leonidas_vehiclerouting, qiskit_doc}; but, as we show below, it typically yields excessively large values.

We say that a reformulation \eqref{QUBO} of \eqref{lcBQP} has a ($\delta$-)optimal $M$ if it is exact with gap $\delta$ 
and minimal $M$.
Note that the minimal choice of $M$ guarantees only a difference $\delta$ between the optimal objective value and those of the unfeasible points. 
To avoid an arbitrarily small gap, $\delta$ can be chosen as a constant independent on the system size in Eq.~\eqref{eq:exact_reformulation condition}.
For specific classes of problems it is in fact possible to formulate strategies for optimal choices of $M$, 
an example being the problem of finding maximum independent sets, where the optimal value of $M$ is apparent \cite{rydberg}.
In general, however, this is intractable. 
\paragraph*{Observation 1:} Finding an optimal $M$ is NP-hard.

\noindent Intuitively, Eq.~\eqref{eq:exact_reformulation condition} already hints at the possibility that finding the optimal $M$ can be as hard as determining the optimal objective value of the original optimization problem. 
In Sec.~\ref{app:hardness}, we give a polynomial reduction of the problem of deciding if the optimum of $f$ is below a threshold to the problem of deciding if a given $M$ provides an exact reformulation.

From a pragmatic point of view though, it is nonetheless of utmost importance to find suboptimal but `good' choices of $M$ using less resources than required for solving the original problem.
In some specific cases 
(Travelling Salesman Problem \cite{Lucas_ising_form}, permutation problems \cite{ayodele2022penalties}, {e.g.}), 
there are recipes for a `reasonable' value for $M$.
Here, we 
provide a generally applicable strategy to determine `good' choices of $M$.
At the heart of our approach is the following observation.
\paragraph*{Observation 2:} %
\newcommand{\xfeas}{\ensuremath\vec x_\text{\rm feas}}%
\newcommand{\ful}{\ensuremath f_{\widehat{\text{\rm unc}}}}%
Let $\xfeas$ be a feasible point of \eqref{lcBQP}, $\delta > 0$, and choose $\ful \leq \min \{f(\vec x) \mid \vec x \in \{0,1\}^n\}$, i.e.\ as a lower bound on the objective function of \eqref{lcBQP} \emph{when omitting the linear constraints}. Then, \eqref{QUBO} with
\begin{equation}\label{eq:MviaBounds}
    M = f(\xfeas) - \ful + \delta
\end{equation}
is an exact reformulation of \eqref{lcBQP} with gap (at least) $\delta$. 

\paragraph*{Proof.}
Let $\vec x$ be an unfeasible point and $M$ chosen according to Eq.~\eqref{eq:MviaBounds}. 
Since $(A\, \vec x - \vec b)^2 \geq 1$,
$
  M(A\vec x - \vec b)^2 \geq f(\xfeas) - \ful + \delta \geq f(\vec x^\ast) - f(\vec x) + \delta%
$.
The second inequality follows from the definition of $\ful$ and the fact that
$f(\vec x^\ast) \leq f(\xfeas)$ for any feasible point $\xfeas$. 
Thus, Eq.~\eqref{eq:exact_reformulation condition} holds.

While any choice of feasible point and lower bound yields an admissible value of $M$, good choices of $M$ attempt to choose the $\xfeas$ with small objective $f(\xfeas)$ and the bound $\ful$ as tight as possible.
A universal strategy to this end is the following:
i) Find a feasible point $\xfeas$, by running a classical solver on \eqref{lcBQP} limited to some constant amount of time.
ii) Solve the Semi-definite Programming (SDP) relaxation (see Supplementary Information \ref{app:sdp}) of the unconstrained minimization of $f$. Use the resulting objective value as $\ful$. 
This strategy is our main numerical tool. We denote the value given by it as $M_\text{\rm SDP}$.

We note that there exist problem instances where even finding a feasible point is hard. In practice, however, 
there exist efficient heuristics to determine feasible points.
The underlying mindset of our strategy is that modern classical solvers can be powerful allies to quantum optimizers, e.g.\ performing tractable pre-computations to optimize the reformulation for the quantum hardware.
As for concluding 
potential advantages of quantum solvers over classical solvers, with this strategy, one must of course be particularly careful not to accidentally reduce the complexity of the problem in the pre-computation. 
In the case of exact reformulations with optimized $M$, we expect  the complexity  not to decrease even if an optimal $M$ is provided. This expectation is supported by the following analysis.

\begin{figure*}
  \centering
\includegraphics[width=1\linewidth]{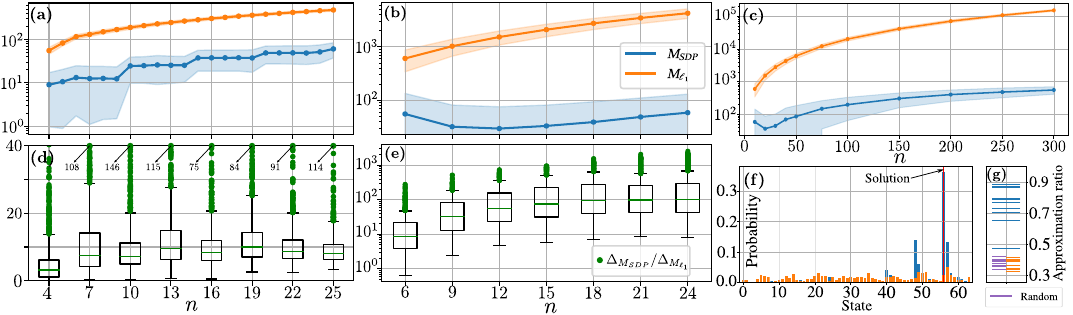}
  \caption{
  {\emph{Numerical results.}} 
  Panels~\textbf{(a)} and \textbf{(b)} depict \emph{numerically} calculated values of $M_\text{\rm SDP}$ (blue) and $M_{\ell_1}$ (orange) for \textbf{(a)} sparse LCBOs (row-sparsity $5$) and \textbf{(b)} portfolio optimization (PO, $w=3$, with $N = n/w$ different stocks up to $8$, $\gamma =1$), averaged over  $1000$ instances for different problem sizes $n$. Shaded stripes indicate the standard deviation. 
  We consistently find significantly smaller values for $M_\text{\rm SDP}$.
  Panels \textbf{(d)} and \textbf{(e)} show box plots of the ratio  $\Delta_{M_\text{\rm SDP}}/\Delta_{M_{\ell_1}}$ of the spectral gaps for the instances in panels \textbf{(a)} and \textbf{(b)}, respectively. Green lines indicate the medians. The whiskers follow the $1.5$ inter-quartile range convention (box contains half of the samples). The black arrows in panel \textbf{(d)} indicate the maximal achieved ratio that is beyond the scope of the displayed $y$-axis. The number at the arrows tip is the maximum value of these outliers. 
The spectral gaps of formulations with  $M_\text{\rm SDP}$ are larger by factors of up to $100$ for sparse LCBOs. For PO the factors increase with $n$, reaching $1000$ for some instances.
  Panel \textbf{(c)} displays values for $M_\text{\rm SDP}$ (using a greedy heuristics) and $M_{\ell_1}$ for larger PO instances ($w = 5$, comprising up to $N=60$ stocks, $\gamma = 1$) averaged over $100$ instances. Shading indicates standard deviation. The gap between $M_\text{\rm SDP}$ and $M_{\ell_1}$ grows with the system size. 
  \textbf{(f)} \emph{Experimental results} of $10$ randomly-selected 6-qubit PO instances for a Trotterized adiabatic solver limited to $150$ two-qubit gates on a trapped-ion IonQ quantum computer. Histogram of $1000$ measurement outcomes of one of the $10$ instances for the $M_\text{\rm SDP}$ and $M_{\ell_1}$ reformulations. The optimal solution (red line) has probability larger than $0.3$ for the $M_\text{\rm SDP}$ reformulation. In contrast, using $M_{\ell_1}$ the experimental distribution is close to uniform. 
  Panel \textbf{(g)} displays the {experimentally obtained} average approximation ratios for the $10$ instances using both reformulations and when sampling uniformly at random from the feasible solutions. 
  With the limited quantum resources of a state-of-the-art noisy quantum device, only instances using $M_\text{\rm SDP}$  outperform the random classical strategy.
  }
  \label{Result_plots}
\end{figure*}

\paragraph*{Spectral gap as a measure of the Big-$M$ problem---.} Near-term quantum (and quantum-inspired) solvers are based on ground-state optimization of an Ising Hamiltonian $H_M:=H_\text{f}+M\, H_\text{c}$ (see Sec.~\ref{app:experimental} for explicit expressions) that encodes the objective function in \eqref{QUBO}. The Hamiltonian $H_\text{f}$ encodes the objective function $f(\vec x)$ of the original problem, while $H_\text{c}$ encodes the constraint term $(A\vec x - \vec b)^2$. Hence, the choice of $M$ directly affects the spectral gap 
\begin{equation}
\Delta_M \coloneqq \frac{E_1-E_0}{E_\text{max}-E_0},
\end{equation}
where $E_0$, $E_1$, and $E_\text{max}$ are respectively the lowest, next-to-lowest, and maximum energies of $H_M$. 
The spectral gap normalization is imperative to realistically compare different Hamiltonian. A physical quantum solver has access to a restricted energy scale that the problem Hamiltonian we want to solve must accommodate to. From a dual perspective, this energy scale relates to numerical precision, as it evaluates the ratio between the accuracy needed to discriminate the lowest-energy state, and the full energy spectrum.

The connection between the spectral gap and $M$ is made apparent through the following calculations (Sec.~\ref{app:boundsdelta}).
\paragraph*{Observation 3:}
If \eqref{QUBO} is an exact reformulation, then
$i$) $E_0=f(\vec x^\ast)$; 
$ii$) $\Delta_M \leq \Delta_0$ for all $M$, with $\Delta_0$ the `spectral gap' of the constrained optimization problem \eqref{lcBQP}; 
and $iii$) $\Delta_M /\Delta_0 \leq (E^{(\text{f})}_\text{max} - E_0) / (E^{(\text{c})}_\text{max} M - M^\ast)$ where $M^\ast$ is an optimal $M$, as defined in the previous section, and $E^{(\text{c})}_\text{max}$ and $E^{(\text{f})}_\text{max}$ are the maximum energies of $H_\text{c}$ and $H_\text{f}$, respectively.
The first implication states simply that $E_0$ equals the optimal objective value of problem \eqref{lcBQP}. The second one that no exact reformulation \eqref{QUBO} can increase the spectral gap. Finally, since both $H_\text{f}$ and $H_\text{c}$ are independent of $M$, the bound in $iii$) implies that $\Delta_M \in \mathcal{O}(\Delta_0/M)$ asymptotically. 
This substantiates the initial intuition that an excessively high penalty $M$ is detrimental to analogue solvers.
For instance, in quantum annealers, adiabaticity requires a run-time $\Omega\big(\Delta_M^{-2}\big)$ \citep{Farhi00,Albash_AQC}. In turn, for imaginary time evolution, the inverse temperature required for constant-error ground-state approximation is $\Omega\big(\Delta_M^{-1}\big)$ \citep{VariationalQuITE19,motta2020,Nishi21,PW09, chowdhury2017,variationalGibbssampler20, Thais_QITE, Kyaw_2024}. 
For variational algorithms, 
also the training is affected: as $M$ grows, the sensitivity of the cost function (the energy) to parameter changes becomes increasingly dominated by $H_\text{c}$ and $H_\text{f}$ increasingly irrelevant \citep{Harwood_routing, Leonidas_vehiclerouting}.
Hence, we propose $\Delta_M$ as a natural measure for the Big-$M$ problem of a QUBO reformulation. 
This allows us to quantitatively benchmark our Big-$M$ recipe against the previous direct bounds, which we do next. 

\paragraph*{Numerical benchmarks---.}
We evaluate the performance of reformulations with optimized $M_\text{\rm SDP}$, against the common choice $M_{\ell_1}$ for three examples of LCBO problem classes:
Random sparse LCBOs, set partitioning problems (SPPs), and portfolio optimization (PO).
Details on the model definitions and further results are presented in Sec.~\ref{app:models}.  

For PO we use the well-known Markovitz model \cite{Markowitz_PO, grant2021markovitz, rosenberg2016risk}, i.e.\ the problem of selecting a set of assets maximizing returns while minimizing risk.
The problem specification requires a vector $\boldsymbol \mu$ of expected returns of a set of $N$ assets, their covariance matrix $\Sigma$, a risk aversion $\gamma > 0$, and a partition number $w$ defining the portfolio discretization.  
Denoting by $x_i$ the units of asset $i$ in the portfolio, the problem formulation reads
\begin{equation}
    \operatorname*{minimize}_{\vec x \in \mathbb N^N}\ -\boldsymbol{\mu}^t \vec x + \gamma\, \mathbf{x}^T \Sigma \mathbf{x} \quad 
\operatorname{subject\,to}\quad  \textstyle\sum_i x_i = 2^w - 1\,. \label{Markowitz_problem}
\end{equation}
The constraint forces the total budget to be  invested. 
The QUBO reduction requires mapping each integer decision variable into $w$ binary variables. 
We generate problem instances from historic financial data on S\&P 500 stocks.

We observe that the result $M_\text{\rm SDP}$ of our algorithm is consistently one order of magnitude smaller than $M_{\ell_1}$ 
for random sparse LCBOs%
~[Fig.~\ref{Result_plots} (a)]
and SPPs [Fig.~\ref{fig:SPP} in Sec.~\ref{app:models}]. 
Concomitantly, the spectral gap is relatively increased by an order of magnitude [Fig.~\ref{Result_plots} (d)].
This corroborates our theoretic consideration relating optimized choices of $M$ to significantly improved spectral properties of the QUBO formulation.
In the PO instances we additionally observe that the advantage in $M$ and $\Delta$ further grows with the problem size [Fig.~\ref{Result_plots} (b) and (e)].
In contrast to random LCBOs and SPPs, PO has a single constraint independent of the problem size. 
While this allows for highly optimized choices of the penalty weight as exemplified by $M_\text{\rm SDP}$, the bound in Eq.~\eqref{eq:Mell1} is oblivious to the intrinsic structure of PO. 
Using a greedy heuristic to determine $f(\ensuremath\vec x_\text{\rm feas})$ (Sec.~\ref{app:greedy}), we further calculate 
$M_\text{\rm SDP}$ for PO instances with up to $300$ binary variables and find that 
the improvements over $M_{\ell_1}$
persist [Fig.~\ref{Result_plots} (c)].

\paragraph*{Quantum hardware deployment---.}
Finally, after building our theoretical framework and numerical methods, we turn to  the question of  
how relevant the Big-$M$ problem on 
actual noisy near-term hardware. 
To this end, we deployed $10$ toy instances of PO on the experimental $25$-qubit trapped-ion quantum computer \emph{Aria-1}  \cite{ionq}. 
As an approximate solver, we executed a Trotterized adiabatic evolution to the Hamiltonians encoding QUBO reformulations (see Sec.~\ref{app:experimental} for implementation details).
We executed a set of $10$ random six-qubit instances with a fixed budget of $150$ two-qubit gates for reformulations with $M_\text{\rm SDP}$ and $M_{\ell_1}$.
The limit on the circuit size and a suitably chosen maximal evolution time determine the number of Trotterization steps.  
The parameter choice ensures approximate adiabaticity.
We find that the probability of measuring the optimal solution is more than an order of magnitude higher with the $M_\text{\rm SDP}$ reformulation than with $M_{\ell_1}$ [Fig.~\ref{Result_plots} (f)]. 
The probability of measuring the optimal solution determines the required number of repetitions and, thus, enters inversely into the time-to-solution.
This behavior is consistently observed across all instances.
Fig.~\ref{Result_plots} (g) shows the average  approximation ratio---quantifying the quality of an approximate solution---%
over all measured outcomes that satisfy the budget constraint per instance.
The $M_\text{\rm SDP}$ formulations yield high  ratios for most instances while the $M_{\ell_1}$ formulations perform comparable to classical uniform random sampling of solutions. 
Thus, already for small instances, we find that using an optimized $M$ is an essential prerequisite for deployment on noisy near-term hardware. 
Given the scaling observed in the numerical benchmarks, we expect that small values of $M$ are even more important for the performances that are not dominated by noise  on intermediate sizes hardware.
Similar results have been observed in simulations of Trotterized adiabatic evolutions; see Fig.~\ref{fig:simulation_trotter} and Supplementary Information \ref{app:simulation}.

\section{Discussion}

On the conceptual side, we formalized the quantum big-$M$ problem, giving rigorous definitions, establishing its computational complexity, and giving bounds on the impact of $M$ on the spectral gap of a QUBO Hamiltonian.
The latter relates the big-$M$ problem to performance guarantees of different solvers. 
From a practitioner's viewpoint, 
our main contribution is a versatile QUBO reformulation algorithm with enhanced spectral properties,
based on the SDP relaxation. Our mindset is that classical solvers should be leveraged to pre-condition problems so as to exploit quantum hardware to its maximal potential---near-term devices in particular. 

In numerical benchmarks, including Markovitz portfolio optimization (PO) instances from  S\&P 500 data, we consistently observe  significant improvements in $M$ and the spectral gap using the proposed algorithm.
In a six-qubit proof-of-principle experiment with trapped ions, we find that these improvements translate into a tangible advantage in the probability of measuring the correct solution as well as the average solution quality. This being already present for small instances, the results presented in Fig.~\ref{fig:main} support that the improvement due to a tighter reformulation will persist for larger instances.

Even when other big-$M$ recipes are available, our method can assist such schemes, e.g.\ by providing a suitable starting point with tractable classical resources. 
For example, using multiple calls to the solver, one can determine a good value for $M$ via binary search. 
After every call, $M$  is increased when the returned solution is infeasible and otherwise decreased in smaller and smaller steps. Starting such a search from an $M$ determined with our method still reduces the number of calls to a potentially expensive solver.

Beyond near-term quantum devices, our analysis of the Big-$M$ problem also applies to future fault-tolerant quantum hardware, for instance in adiabatic schemes \citep{Farhi00,Albash_AQC,Geier_25} or quantum imaginary-time evolution simulations \citep{VariationalQuITE19,motta2020,Nishi21,PW09, chowdhury2017,variationalGibbssampler20, Thais_QITE, Kyaw_2024}. Besides, a particularly interesting question to explore is how beneficial our general big-$M$ recipe is for quantum-inspired, classical solvers \citep{Goto_simadiabatic, Kanao_simbifurcation, Mohseni_Isingmach}.
Combining our approach with modern scalable randomized algorithms for SDPs \cite{SketchySDP} can potentially further reduce the complexity of calculating optimized values for $M$.
Finally, while our method was conceived for general instances, nuances of specific problems can enable heuristic tools for tighter lower bounds or feasible points, 
as already exemplified with the greedy heuristic for PO instances.

\begin{figure*}
  \centering
    \includegraphics[width=1\linewidth]{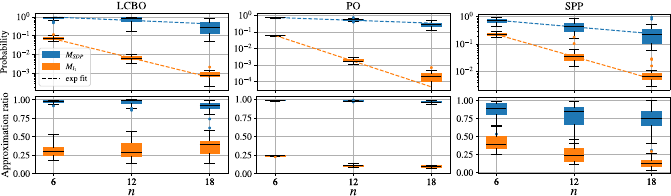}
  \caption{ 
  \emph{Simulation of Trotterized adiabatic evolution.}
  The first and second rows show boxplots of, respectively, the probability of sampling the exact solution and the approximation ratio of the obtained state at the end of a simulation of a trotterized adiabatic-evolution (see Supplementary Information \ref{app:simulation} for details) for different problem sizes $n$ and  QUBO reformulation using penalty weight $M_\text{\rm SDP}$ (blue) and $M_{\ell_1}$ (orange).
  We use $25$ random instances of one of the benchmark models (LCBO, PO, SPP) in each column. The whiskers follow the $1.5$ inter-quartile range convention, the box contains half of the samples. In all settings, we observed that $M_\text{SDP}$ reformulation leads to significantly high probability of obtaining the exact ground state and high approximation ratios, contrarily to $M_{\ell_1}$. In the first row, the dotted lines represent exponential fits to the median of the probability  with the form $p(n) = \alpha e^{-\beta n}$. The fitted decay factors $\beta_{\text{fit}}$ are $0.07(4)$ and $0.390(4)$ for LCBO (blue and orange curves, respectively), $0.07(2)$ and $0.60(2)$ for PO, and $0.093(7)$ and $0.310(5)$ for SPP. 
  The number in parentheses represents the standard deviation in the last digits of the corresponding value.
  We observe that also the ratio of the probabilities for $M_{\ell_1}$ over $M_\text{\rm SDP}$ decreases exponentially with $n$, proportional to $c^{-n}$ with $c$ between $1.24$ and $1.70$ in our examples. Thus, we find a significantly increasing advantage  in, e.g. sampling complexity, with the system size of the optimized penalization strategy over the baseline.
  }
 \label{fig:simulation_trotter}
\end{figure*}

\section{Methods}
\subsection{On the hardness of the quantum Big-$M$ problem.} \label{app:hardness}
\def\decidePM{\textsf{decidePM}}
\def\decideF{\textsf{decideF}}
Here we formally establish that determining an optimal $M$ is in general as hard as finding the objective value of the original problem. {We do this by proving a simple reduction from the decision-problem version of the latter to that of the former.} 

Finding the optimal objective value of a function $f$ under constraints $\mc C$ is equivalent to an associated decision problem, \decideF, {which,} 
given a threshold $a$ and a gap $\delta >0${, decides} if $\min_{{\vec x}\in \mc C} f(\vec x) \leq a$ (`\texttt{smaller}') or if $\min_{{\vec x}\in \mc C} f(\vec x) \geq a + \delta$ (`\texttt{greater}'). 
Note that having access to an oracle for \decideF\ allows one to efficiently find the optimal objective value via binary search.
We want to relate the complexity of \decideF\ to the following decision problem:
\emph{Given an instance of \eqref{lcBQP}, an $M${,} and a gap $\delta > 0$, decide if \eqref{QUBO} with {the given} $M$ is an exact reformulation of \eqref{lcBQP} with gap at least $\delta$ (`\texttt{yes}') or \eqref{QUBO} fails to be an exact reformulation (`\texttt{no}').} We refer to this problem as \decidePM. 
Note that \decidePM\ is equivalent to the problem of finding the $\delta$-optimal $M$: Given an optimal value for $M$, \decidePM\ can be solved by comparing the $M$ under scrutiny to the optimal one. In turn, with  
an oracle for \decidePM, the optimal $M$ can be found via binary search. Next, we prove the promised reduction.

\begin{lemma}\label{lem:PMhardness}
 The problem \decideF\ reduces to \decidePM.
\end{lemma}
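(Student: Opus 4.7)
The plan is a polynomial-time Karp reduction. Given a \decideF instance $(f_0, a, \delta)$ with $f_0(\vec x)=\vec x^T Q_0\vec x$ a quadratic form on $\{0,1\}^n$ (w.l.o.g.\ unconstrained, since any linear equality constraints can be absorbed into $Q_0$ via the standard squared-penalty gadget of App.~\ref{app:gadgets}), I will construct a \decidePM instance whose ``yes'' answer coincides exactly with the ``smaller'' case of \decideF. The key idea is to design an \eqref{lcBQP} whose feasible and infeasible optima reduce, after substitution, to an inequality of the form $\min f_0 \le \text{(a quantity under our control)}$; one then tunes $M$ and the gap so that this inequality becomes $\min f_0 \le a$.

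The \eqref{lcBQP}-instance uses $2n{+}1$ binary variables: the original $\vec x$, a flag $y$ forced to $1$ by a single linear constraint $y=1$, and a shadow $\vec z\in\{0,1\}^n$ meant to equal $y\vec x$. I take the quadratic objective
\begin{equation*}
F(\vec x,y,\vec z)\;=\;\vec z^T Q_0 \vec z\;+\;p\sum_{i=1}^{n}\bigl(3z_i+x_iy-2x_iz_i-2yz_i\bigr),
\end{equation*}
where the $p$-term is the standard linearization gadget: a four-case check on $(x_i,y)\in\{0,1\}^2$ shows its per-index contribution vanishes iff $z_i=x_iy$ and is $\ge 1$ otherwise. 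Choosing $p$ polynomially large, say $p>2n^2\max_{i,j}|(Q_0)_{ij}|+|a|+\delta+2$, forces every global minimizer of $F$ (and of $F$ plus the constraint penalty) to satisfy $\vec z=y\vec x$. All linear-in-a-single-variable monomials can be absorbed into the diagonal of the resulting integer $Q$ via $x^2=x$, so the instance is indeed in the form \eqref{lcBQP}.

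With the gadget in place, both sides of Eq.~\eqref{eq:exact_reformulation condition} become computable symbolically. On the feasible region $\{y=1\}$ the gadget pins $\vec z=\vec x$ and $F(\vec x,1,\vec x)=f_0(\vec x)$, so $f(\vec x^\ast)=\min_{\vec x}f_0(\vec x)$. On the infeasible region $\{y=0\}$ the gadget pins $\vec z=\vec 0$ and $F=0$; the constraint-violation penalty is $(y-1)^2=1$, so the minimum of $F+M(y-1)^2$ over infeasible points equals $M$. Substituting into \eqref{eq:exact_reformulation condition}, the exactness condition collapses to the single scalar inequality $\min f_0+\delta'\le M$, i.e., $\min f_0\le M-\delta'$. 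Setting $\delta':=\max(1,-a+1)$ and $M:=a+\delta'$ makes both strictly positive while $M-\delta'=a$. Hence the constructed \decidePM instance answers ``yes'' iff $\min f_0\le a$ (the ``smaller'' case of \decideF) and, under the \decideF-promise, ``no'' iff $\min f_0\ge a+\delta$ (the ``greater'' case). All data have polynomial size in the original input, completing the reduction.

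The only delicate step is the gadget bound: verifying that any deviation $\vec z\neq y\vec x$ cannot beat the two minima claimed above. Since each per-index deviation contributes at least $p$ to $F$, while $\vec z^T Q_0\vec z$ can vary by at most $n^2\max_{i,j}|(Q_0)_{ij}|$ between any two $\vec z$'s, the crude bound on $p$ above suffices. Everything else is elementary algebra.
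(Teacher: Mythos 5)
Your construction is genuinely different from the paper's proof (which slices \decideF{} into constant-Hamming-weight subproblems and uses a trivializing constraint $\vec x = \vec 0$ together with auxiliary counting variables $p,m$), and most of it is sound: the flag variable $y$ with the single constraint $y=1$, the linearization gadget pinning $\vec z = y\vec x$, and the resulting identities ``feasible optimum $=\min f_0$'' and ``minimum of the penalized objective over unfeasible points $= M$'' all hold for your choice of $p$, and they do collapse Eq.~\eqref{eq:exact_reformulation condition} to the scalar inequality $\min f_0 + \delta' \le M$.

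However, the final parameter choice breaks the reduction. You set $\delta' = \max(1, -a+1)$ and $M = a+\delta'$ in order to force $M>0$, but this $\delta'$ can exceed the \decideF{} gap $\delta$ (e.g.\ whenever $\delta<1$, or whenever $a<0$ and $\delta<|a|+1$). In the `\texttt{greater}' case you only know $\min f_0 \ge a+\delta$; if $a+\delta \le \min f_0 < a+\delta' = M$, the constructed reformulation \emph{is} exact but with gap $M-\min f_0 \in (0,\delta')$. This lands in the promise gap of \decidePM{} --- the instance is neither ``exact with gap at least $\delta'$'' nor ``fails to be exact'' --- so the oracle's answer is undefined and may be `\texttt{yes}', in which case you wrongly output `\texttt{smaller}'. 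Soundness of the `\texttt{no}' direction requires $\delta' \le \delta$, which your choice does not guarantee. The fix stays within your own construction: keep $\delta'=\delta$ and restore positivity of $M$ by shifting the feasible optimum instead, e.g.\ add a term $c\,y^2$ (absorbable into the diagonal of $Q$) with a sufficiently large integer $c\ge -a$, so that the feasible optimum becomes $\min f_0 + c$ and one may take $M = a + c + \delta > 0$; then the \decidePM{} instance answers `\texttt{yes}' iff $\min f_0 \le a$ and `\texttt{no}' iff $\min f_0 \ge a+\delta$, as required.
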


\begin{proof}
  Consider an instance $(f, \{0,1\}^n, a, \delta
  )$ of \decideF. 
  W.l.o.g., we assume that the instance is unconstrained. 
  For the constraint problem there exist a polynomial reduction to an unconstrained problem,\ e.g.\ using \eqref{QUBO} with the value of $M$ defined in Eq.~\eqref{eq:Mell1}. 

  We will split \decideF\ into decision problems where we decide the optimum for the subset with constant hamming weight $|\vec x| = \sum_{i=1}^n x_i = k \in \{0, 1, \ldots, n\}$. Deciding for all $k \in \{0, 1, \ldots, n\}$ individually if $\min_{|\vec x|=k} f(\vec x) \leq a$ (`\texttt{smaller}') or $\min_{|\vec x|=k} f(\vec x) \geq a + \delta$ (`\texttt{greater}') allows us to solve \decideF\ in the following way. If all constant-Hamming weight decisions return `\texttt{greater}', we also conclude `\texttt{greater}' for \decideF. If at least one constant-Hamming weight decision returns `\texttt{smaller}', we  return `\texttt{smaller}' for \decideF. 
  It is straight-forward to see that this strategy solves \decideF\ correctly in both cases. 

  It remains to reduce the decision problem with constant Hamming weight $k$ to \decidePM. For $k =0$, i.e.\ $\vec x = 0$, we can directly solve the decision problem by evaluation. If we find $f(\vec 0) \leq a$, we  conclude `\texttt{smaller}' for \decideF.
  Thus, we can restrict our focus in the remainder to $k > 0$ and assume that $f(\vec 0) > a$, where \decideF\ is not yet decided. 
  We choose $\alpha \geq \max_{\vec x\in \{0,1\}^n} f(\vec x) - \min_{\vec x\in \{0,1\}^n} f(\vec x) + f(\vec 0) - a$, e.g.\ $\alpha = M_{\ell_1} + f(\vec 0) - a$ using $M_{\ell_1}$ defined in Eq.~\eqref{eq:Mell1} for the quadratic form $Q$ defining $f$. 
  Consider the following optimization problem:
  \begin{equation}
  \label{eq:trivial_P}
    \operatorname*{minimize}_{\vec x \in \{0,1\}^n} f(\vec x) + \alpha |\vec x| \left| |\vec x| - k\right| 
    \quad\operatorname{subject\ to}\quad \vec x = \vec 0\,.
  \end{equation}
  The optimal point of \eqref{eq:trivial_P} is the only feasible point $\vec x^\ast = \vec 0$ with objective value $f(\vec 0)$. In other words, the constraint renders the optimization problem \eqref{eq:trivial_P} trivial.
  Still of interest to us is the associated problem of deciding if certain values of $M$ yield unconstrained reformulations of \eqref{eq:trivial_P}.
  As formulated \eqref{eq:trivial_P} is not an instance of \eqref{lcBQP}, since
  the objective function is not quadratic.
  But the optimization problem \eqref{eq:trivial_P} can be recast as the following binary quadratic problem: 
  \begin{equation}\begin{split}
  \label{eq:trivial_P_bqp}
    \operatorname*{minimize}_{{\vec x \in \{0,1\}^n}\!\!,\ {p,  m \in \{0,\ldots, n\}}} \hspace{-1.9em} f(\vec x) + g(\vec x, p, m)
    \; \operatorname{subject\ to} \; \vec x = \vec 0\,.
  \end{split}\end{equation}
    with 
    $
        g(\vec x, p, m) = \alpha |\vec x| (p + m) 
    + \alpha ( n^3 + 1) (p - m - |\vec x |  + k)^2
    $.\textbf{}
  Here the non-negative integer variables $p$ and $m$ can each be encoded with $\lceil \log n\rceil$ binary variables. The last summand in $g$ dominates the objective function for all values of $\vec x$, $p$ and $m$.
  Thus, at optimal $p$ and $m$, it 
  enforces the
  constraint $p - m = |\vec x |  - k$. 
  For $\vec x \neq 0$ the minimum of the objective function over $p$ and $m$ is attained when either $p$ or $m$ is equal to $\left||\vec x |  - k\right|$ while the other variable vanishes.   
  We conclude that for all $\vec x$ the objective functions of \eqref{eq:trivial_P} and \eqref{eq:trivial_P_bqp} at optimal $p$ and $m$ coincide.
  Since \eqref{eq:trivial_P_bqp} is an instance of \eqref{lcBQP}, it defines instances of \decidePM. 
  
  We now decide if \eqref{QUBO} with $M = (f(\vec 0) - a)/k > 0$ is an exact reformulation with gap $\delta$ of \eqref{eq:trivial_P_bqp}.  
{}  If the answer of \decidePM\ is `\texttt{yes}' (`\texttt{no}'), we return `\texttt{greater}' (`\texttt{smaller}'). Our claim is  that this strategy correctly solves the decision problem for the minimum of $f$ with constant Hamming-weight $k$.

  To see this, let us first consider the case `\texttt{greater}',  where $\min_{|\vec x| = k} f(\vec x) \geq a + \delta$.   By our choice of $\alpha$, the QUBO reformulation of \eqref{eq:trivial_P_bqp} with $M = (f(\vec 0) - a)/k > 0$ has an objective function that attains its minimum over the unfeasible points for $k = |\vec x|$. Thus, this minimum fulfills  
  \begin{equation} \label{eq:proof:eq1}
  \begin{split}
    &\min_{\vec x \neq 0} \left\{ f(\vec x) + \alpha |\vec x| \left | |\vec x | - k \right | + 
    \frac{f(\vec 0) - a} k |\vec x | \right\} \\
    &\quad= \min_{|\vec x| = k} f(\vec x) + f(\vec 0) - a 
    \geq \delta + f(\vec 0)\,. 
  \end{split}
  \end{equation}
  Due to the trivializing constraint, $f(\vec 0)$ is the optimal value of \eqref{eq:trivial_P_bqp}. 
  Hence, \eqref{eq:proof:eq1} establishes the criterion Eq.~\eqref{eq:exact_reformulation condition} for an exact reformulation.  As required in this case, \decidePM, thus, returns `\texttt{yes}' and we decide correctly.

  Second, let us consider the case `\texttt{smaller}', i.e.\ $\min_{|x| = k} f(\vec x) \leq a$. By the same argument as before, we now find that the minimum of the objective function of the QUBO reformulation over the unfeasible points is smaller or equal than $f(\vec 0)$. Thus, \decidePM\ returns `\texttt{no}' in this case.  Using \decidePM, we therefore always arrive at the correct decision about the minimum of $f$ for constant Hamming-weight.
\end{proof}
Since $\decideF$ encompasses \textsf{NP}-complete problems like \textsf{3SAT}, as a corollary of Lemma~\ref{lem:PMhardness}, we establish that finding the optimal value of $M$ is \textsf{NP}-hard.

\subsection{Bounds on the spectral gap of Big-$M$ QUBO reformulations.}\label{app:boundsdelta}
Next, we provide the detailed argument for \emph{Observation 3} of the main text, and expand on some of its implications. 

  Let $H_M = H_f + M H_\text{\rm c}$ be a Hamiltonian encoding of \eqref{QUBO}.  
  The normalized spectral gap of $H_M$ is defined as 
  \begin{equation}
    \Delta_M \coloneqq \frac{E_1 - E_0}{E_{\text{\rm max}} - E_0}\,, 
  \end{equation}
  where $E_0$, $E_1$ and $E_\text{max}$ are the respective lowest, next-to-lowest and maximum energies of $H_M$. 
  We will study the behavior of $\Delta_M$ compared to the corresponding quantity of the constraint optimization problem \eqref{lcBQP}. 
  To this end, let
  $\vec x^\ast$ be an optimal point as before and 
   let further $\vec x^\ast_1$ be a next-to-optimal point of \eqref{lcBQP}, i.e.\ the optimal point of \eqref{lcBQP} with the additional constraint $\vec x \notin f^{-1}(f(\vec x^\ast))$. 
  Denote by $\mathcal C \coloneqq \{\vec x \mid A\, \vec x = \vec b\}$ the constraint set and by $\overline{\mathcal C} \coloneqq \{0,1\}^n \setminus \mathcal C$ its complement. 
  We define the two upper bounds of the shifted objective function $\bar f \coloneqq \max_{\vec x \in \mathcal C} f(\vec x) - f(\vec x^\ast)$ and $\bar f_c \coloneqq \max_{\vec x \in \overline{\mathcal C}} f(\vec x)  - f(\vec x^\ast)$. We refer to 
  \begin{equation}
    \Delta_0 \coloneqq 
    \frac{f(\vec x^\ast_1) - f(\vec x^\ast)}
    {\bar f}\,
  \end{equation}
  as the \emph{spectral gap} of \eqref{lcBQP}. 

  (\textit{i}) The Ising encoding ensures that $\langle \vec x| H_f | \vec x \rangle = f(\vec x)$ for all $\vec x$, where $|\vec x\rangle$ denotes the basis vector that encodes the binary vector $\vec x$. For $\vec x$ feasible, $|\vec x \rangle$ is in the kernel of $H_\text{\rm c}$. Thus, 
  when $M$ is chosen such that \eqref{QUBO} is an exact reformulation of \eqref{lcBQP}, we have 
  $E_0 = f(\vec x^\ast)$. 

  (\textit{ii}) Analogously, $f(\vec x^\ast_1)$ is still in the spectrum of $H_M$. Thus, $E_1 \leq f(\vec x^\ast_1)$. 
  By Eq.~\eqref{eq:exact_reformulation condition} and assuming $\delta \leq \delta^\ast \coloneqq f(\vec x^\ast_1) - f(\vec x^\ast)$, we have 
  $f(\vec x^\ast) + \delta \leq E_1$. 
  We conclude that $\delta \leq E_1 - E_0 \leq \delta^\ast $ (with the lower bound holding as long as $\delta$ does not exceed the upper bound). 
  Note that the lower bound is saturated for the optimal $M$. 
  Also $\bar f + f(\vec x^\ast)$ is still in the spectrum of $H_M$. 
  Hence, $\bar f \leq E_\text{\rm max} - f(\vec x^\ast)$. 
  All together, combined with (\textit{i}) and its assumption, we arrive at the bound 
  $\Delta_M = (E_1 - E_0) / (E_\text{\rm max} - E_0) \leq \delta^\ast / \bar f = \Delta_0$. 

  (\textit{iii}) To infer the scaling of the spectral gap with $M$, we note that 
  $E_\text{\rm max} - E_0  \geq \min_{\vec x \in \overline{\mathcal C}} \{ f(\vec x) - f(\vec x^\ast)\} + M\, \|H_c\| \geq M\, \|H_c\| - M^\ast$, where we used the positivity of $H_c$ on the infeasible subspace and denote by $M^\ast$ the optimal $M$, i.e.\ the minimal $M$ satisfying Eq.~\eqref{eq:exact_reformulation condition}.  
  Hence, $\Delta_0 / \Delta_M \geq (E_\text{\rm max} - E_0) / \bar f  \geq  (M\|H_c\| - M^\ast) / \bar f \geq  (M\|H_c\| - M^\ast) / \|H_f -E_0\| $.  Thus, in particular $\Delta_M \in O(\Delta_0 / M)$. 

\begin{figure*}[tb]
\includegraphics[width = .8\textwidth]{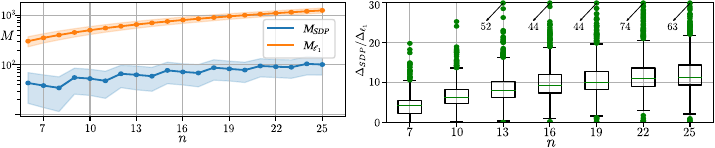}
\centering
\caption{
{\emph{Set Partitioning Problem numerics.}}
Big-$M$ value (left) with two strategies, $M_{\ell_1}$ and $M_\text{\rm SDP}$,  and ratio between spectral gaps resulting from the two choices (right), on a dataset of $1000$ SPP instances generated with density $= 0.25$. The black arrows in the panel on the right indicate the maximal ratio  achieved by extreme outliers outside of the axis' scope. }
\label{fig:SPP}
\end{figure*}

\subsection{Benchmarked models.} \label{app:models}
The present section illustrates the model definition and relevant details of the optimization problems tested, together with further results.

Random sparse LCBOs.~A general class of linearly-constrained binary quadratic optimization problems, whose formulation is \eqref{lcBQP}, have been generated. We choose random instances for $Q$ and $A$ with a bounded row-sparsity $s$, i.e.\ $|\{j : Q_{ij} \ne 0\}| \le s \;\; \forall i $, and similarly for $A$. The non-vanishing entries of $Q$, $A$ and $b$ are uniformly drawn at random. 
We let the number of constraints $m$ grow linearly with the number of binary variables $n$, specifically, $m = \max\{\lfloor \frac{n}{5}\rfloor, 1\}$.

Set partitioning problem (SPP). Let $R_i \subset S$ be a subset of $S=\{1,\dots,m\}$ with an associated cost $c_i\ge 0$, for $i=1,\dots,n$. A family of subsets $\{R_i\}_{i \in W}$ is a partition of $S$ if $\cup_{i \in W} R_i = S$ and $R_i \cap R_j = \emptyset$ for all $i\ne j \in W$.
The SPP consists of finding a partition of $S$ with minimal total cost:
\begin{equation}
\label{eq:SPP}
   \operatorname*{minimize}_{\mathbf{x} \in \{0,1\}^n}
   \; \vec c^t \vec x \quad \operatorname{subject\,to} \quad \textstyle\sum_{i : \alpha \in R_i} x_i = 1 \;\forall\ \alpha\ \in S\,.
\end{equation}

The objective variables encode the subset family, with $x_i = 1$ if $R_i$ is in the family and $0$ otherwise.
The constraints force the family to be a partition of $S$. 
We generate instances by randomly selecting constraint matrices $A$ with 
fixed density, i.e.\ number of non-zero entries over total number of entries.
Fig.~\ref{fig:SPP} shows simulations results relative to this problem class.

Portfolio Optimization (PO). The present paragraph illustrates how the data used in Portfolio Optimization instances were fetched from real data and adapted to Markowitz formulation \eqref{Markowitz_problem}. From stock market index S\&P500, we downloaded the stock price history, referring to the 2 years period December 2020 until November 2022 with one-month interval, of 121 out of the 500 company stocks tracked by S\&P500 (namely, the ones with no missing data in said intervals). Let us call $P_{t,a}$ such cost of an asset $a$, with time index $t$. The return at time step $t$ is defined as
\begin{equation}
    r_{t,a} = \frac{P_{t,a} - P_{t-1,a}}{P_{t-1,a}}
\end{equation}
from which the expected return vector $\tilde{\boldsymbol{\mu}}$ and the covariance matrix $\tilde{\Sigma}$ can be computed as $
    \tilde{\mu}_a = \frac{1}{T}\sum_{t = 1}^T r_{t,a}$ 
and $
    \tilde{\Sigma}_{a,b} = \frac{1}{T-1}\sum_{t = 1}^T (r_{t,a} - \mu_a)(r_{t,b} - \mu_b)$.
We encode the real financial stock market data with decimal precision of $10^{-4}$. 

The number of stocks in an instance is determined by the partition number $w$ \cite{grant2021markovitz}, that describes the granularity of the portfolio discretization. 
The budget is divided in $2^w - 1$ equally large units. 
Each asset decision variable $x_i$ is an integer that can take values from $0$ up to $2^w - 1$, indicating how many of these partitions to allocate towards asset $i$.
Therefore, we need $w$ bits per asset and 
a instance of size $n$ features $N = n /w$ different assets. 
The constraint $\sum_i x_i = 2^w - 1$ ensures that all $2^w-1$ units are invested in one of the stocks.
In the experiments we used a number of bits per asset $w \in \{2,3,5\}$. 
We generate random PO instances by sampling a subset of $N$ assets from the $121$ stocks with complete data uniformly at random.

Notice that $\tilde{\boldsymbol{\mu}}^t \boldsymbol{p}$ is the expected return of a portfolio if $\boldsymbol{p}$ represents the vector of the \emph{portions} of the portfolio for each asset, i.e.\ $0 \le p_i \le 1$ and $\sum_i p_i = 1$. In order to have integer decision variables, the number of chunks $x_i = (2^w-1)p_i$ is used, and in the final formulation (\ref{Markowitz_problem}) of the Markowitz model the factors are absorbed in the objective function, defining $\boldsymbol{\mu} = \tilde{\boldsymbol{\mu}}/(2^w-1)$ and $\Sigma = \tilde{\Sigma}/(2^w-1)^2$.

The last parameter that one needs to set to fully specify the instance is the risk aversion factor $\gamma$, weighting differently the return and the volatility in the objective function. In the experiments we used risk aversion factor $\gamma \in \{0.5,1,2\}$.

\subsection{Experimental implementation.}\label{app:experimental}

The adiabatic theorem \cite{Farhi00, Albash_AQC} states that a quantum system will remain in its instantaneous ground state through small perturbations to its Hamiltonian. Adiabatic quantum computation exploits this fact by preparing a system under the Hamiltonian
\begin{equation}
    H(s)=(1-s) H_0 + s H_P,
\end{equation}\label{adiabatic_hamiltonian}
where $H_0$ is a Hamiltonian with an easy to prepare ground state, $H_P$ encodes the solution of a problem, and the schedule $s$ is evolved from $0$ to $1$. If the evolution meets the conditions of the adiabatic theorem, the system will be at the ground state of $H_P$ at the end of the evolution, hence solving the problem.

A QUBO instance can be mapped into an Ising Hamiltonian by promoting each binary variable $x_i$ into quantum operators. Namely, by substituting $x_i$ for $(1-\sigma^z_i)/2$,
where $\sigma^j_i$ is the Pauli matrix $j$ acting on qubit $i$. Our problem Hamiltonian reads  $H_P=H_M=H_\text{f}+M\, H_\text{c}$, when combining the objective function $f(\vec x)$ and the penality term $M(A\vec x - \vec b)^2$. This way, we recover the diagonal matrix of the QUBO instance. The initial Hamiltonian is usually chosen as $H_0=-\sum_{i=1}^n\sigma^x_i$, as it has an easy to prepare ground state, the equal superposition of computational basis states. This is important, as one of the requirements for the evolution to work is a non-zero overlap between the initial and final ground states.

Deployment of algorithms on available quantum hardware requires precise fine-tuning as well as knowledge of the physical implementations of the device. In this work we target a gate-based ion trap quantum computer, as available through the IonQ cloud service \cite{ionq}. The native interactions available in the device are the following: 
Single qubit gates are fixed $\pi$ and $\pi/2$ rotations along the $X-Y$ plane, with precise control over the relative phase. 
Using this method, rotations around the $Z$ axis are done virtually \cite{mckay2017virtualz}, and incur no noise. 
The IonQ aria-1 device allows for partially-entangling Mølmer-Sørenson \cite{molmer1999msgate, solano1999msgate} gates, that is, a precisely tuned two qubit rotation along the $XX-YY$ plane with virtual control over the relative phases. Since the physics of the ion trap has access to a native $XX$ interaction, we will perform a change of basis to the proposed Hamiltonian, so that the two-body terms in $H_M$ are combinations of $\sigma^x_i\sigma^x_j$, and $H_0$ comprises $\sigma^z_i$ terms. Crucially, the ground state of the initial Hamiltonian is the starting state of the device, resulting in an even easier preparation for the purposes of our evolution.

The adiabatic evolution, ideally performed by slowly sweeping over the interaction parameters of the device, will need to be Trotterized \cite{trotter1959trotter, hatano2005trotter}. By selecting a total evolution time and a discretization step, the evolution can be approximately reproduced by single and two-qubit gates acting on the quantum device. Moreover, this method can be used to control the amount of quantum resources dedicated to solving the problem, which provides an equal starting point to test different QUBO encodings.

In order to conform with the device specifications, and compare different QUBO reformulations under the same conditions, we limit the number of two-qubit gates to $150$. This corresponds to a final annealing time of $100$ with a Trotterization step of $10$ for the instances considered. The parameters used are far from an ideal adiabatic evolution, however, they should still result in an amplification of the ground state of the problem. As shown in Fig.~\ref{Result_plots} (f), this amplification is only significant using the $M_\text{\rm SDP}$ reformulation, making it indispensable even for current noisy devices. This can also hint at advantages on more complex algorithms such as QAOA or VQE when encoding the problem using the proposed $M_\text{\rm SDP}$ reformulation. 

In approximate optimization, outputs that reach a high value for the objective function are desirable even if they do not maximize it. In order to quantify the quality of the solutions, we will use an approximation ratio.
We define the approximation ratio as 
$
    \alpha(\vec x) = (f(\vec x)-f(\vec x_{max}))/(f(\vec x^*)-f(\vec x_{max}))
$
if $\vec x$ satisfies the constraints.

We build the Hamiltonian for the presented problem instances and Trotterize them using the quantum simulation library Qibo \cite{qibo_paper, qibo_software}. Then, the resulting quantum circuits are parsed into native gate instructions for the IonQ \emph{aria-1} device. Code to reproduce this procedure is made available in the following Github repository \cite{github_repo_sergi}. To confirm that this behavior remains consistent as the instances grow, we further simulate exact Trotterized adiabatic evolutions on instances with $12$ and $18$ qubits, see Fig.~\ref{fig:simulation_trotter} and Supplementary Information \ref{app:simulation}. For all models and problem sizes, we consistently observe significantly improved probabilities for the optimal solution and larger approximation ratios using $M_\text{SDP}$ over the baseline with $M_{\ell_1}$.

\subsection{Greedy algorithm for Portfolio Optimization.} \label{app:greedy}
    Any strategy to get a feasible point $\ensuremath\vec x_\text{\rm feas}$ using classical resources is a viable option to obtain $M$ via Eq.~\eqref{eq:MviaBounds}. For various classes of optimization problems, it is possible to apply a greedy heuristic algorithm to efficiently obtain a quasi-optimal point. To exemplify this, we describe a straight-forward greedy strategy  for instances of Portfolio Optimization \eqref{Markowitz_problem}.
    Recall, that given $N$ assets and a partition number $w$, the portfolio is discretized into $2^w-1$ equal fractions.  The following algorithm aims at obtaining solutions by systematically allocating each portfolio portion to the asset that minimizes the objective function when evaluated on the existing segment of the portfolio.   

    \begin{algorithm}
    \caption{\text{GreedyPortfolio}$(\Sigma, \mu, \gamma, N, w)$}\label{greedy_PO}
    \leftskip 5pt 
    \SetKwInOut{Input}{input}
    \SetKwInOut{Output}{output}
    \Input{Risk matrix $\Sigma$, expected return $\mu$, risk aversion factor $\gamma$. \newline Number of assets $N$. \newline Partition number $w$.}
    $\vec x \leftarrow \vec 0 \in \ZZ^N$ \hfill \tcp{Initialize empty portfolio}
    \For{$i \leftarrow 1$ \dots $2^w-1$}{
        \For{$k \leftarrow 1$ \dots $N$}{
        $\vec{\tilde{x}} \leftarrow \vec x$ \\
        $\tilde{x}_k \leftarrow \tilde{x}_k + 1$ \\
        $f' \leftarrow - \vec \mu^t \vec{\tilde{x}} + \gamma \vec{\tilde{x}}^t \Sigma \vec{\tilde{x}}$ \\ 
        \If{$k=1$ or $f' < f^\ast$}{
            $k^{\ast} \leftarrow k$ \\
            $f^{\ast} \leftarrow f'$            
        }
        }
    $x_{k^{\ast}} \leftarrow x_{k^{\ast}} + 1$ \hfill  \tcp{Assign a unit to best asset}
    }
    \textbf{return} optimized portfolio $\vec x$ 
    \end{algorithm}

\section*{Data Availability}
The datasets generated and analysed for the IonQ experiment implementation are available in the quantum-bigm-trotterization repository \url{https://github.com/igres26/quantum-bigM-trotterization/tree/main/data} \cite{github_repo_sergi}.
All the other datasets generated and analysed during the current study are available in the qubo\_mapper repository, \url{https://github.com/EdoardoAlessandroni/qubo\_mapper/tree/master/problems} \cite{github_repo_edo}.

\bibliography{qubo}

\begin{thebibliography}{49}%
\makeatletter
\providecommand \@ifxundefined [1]{%
 \@ifx{#1\undefined}
}%
\providecommand \@ifnum [1]{%
 \ifnum #1\expandafter \@firstoftwo
 \else \expandafter \@secondoftwo
 \fi
}%
\providecommand \@ifx [1]{%
 \ifx #1\expandafter \@firstoftwo
 \else \expandafter \@secondoftwo
 \fi
}%
\providecommand \natexlab [1]{#1}%
\providecommand \enquote  [1]{``#1''}%
\providecommand \bibnamefont  [1]{#1}%
\providecommand \bibfnamefont [1]{#1}%
\providecommand \citenamefont [1]{#1}%
\providecommand \href@noop [0]{\@secondoftwo}%
\providecommand \href [0]{\begingroup \@sanitize@url \@href}%
\providecommand \@href[1]{\@@startlink{#1}\@@href}%
\providecommand \@@href[1]{\endgroup#1\@@endlink}%
\providecommand \@sanitize@url [0]{\catcode `\\12\catcode `\$12\catcode
  `\&12\catcode `\#12\catcode `\^12\catcode `\_12\catcode `\%12\relax}%
\providecommand \@@startlink[1]{}%
\providecommand \@@endlink[0]{}%
\providecommand \url  [0]{\begingroup\@sanitize@url \@url }%
\providecommand \@url [1]{\endgroup\@href {#1}{\urlprefix }}%
\providecommand \urlprefix  [0]{URL }%
\providecommand \Eprint [0]{\href }%
\providecommand \doibase [0]{https://doi.org/}%
\providecommand \selectlanguage [0]{\@gobble}%
\providecommand \bibinfo  [0]{\@secondoftwo}%
\providecommand \bibfield  [0]{\@secondoftwo}%
\providecommand \translation [1]{[#1]}%
\providecommand \BibitemOpen [0]{}%
\providecommand \bibitemStop [0]{}%
\providecommand \bibitemNoStop [0]{.\EOS\space}%
\providecommand \EOS [0]{\spacefactor3000\relax}%
\providecommand \BibitemShut  [1]{\csname bibitem#1\endcsname}%
\let\auto@bib@innerbib\@empty
\bibitem [{\citenamefont {Abbas}\ \emph {et~al.}(2023)\citenamefont {Abbas}
  \emph {et~al.}}]{abbas2023quantum}%
  \BibitemOpen
  \bibfield  {author} {\bibinfo {author} {\bibfnamefont {A.}~\bibnamefont
  {Abbas}} \emph {et~al.},\ }\bibfield  {title} {\bibinfo {title} {Quantum
  optimization: Potential, challenges, and the path forward},\ }\Eprint
  {https://arxiv.org/abs/2312.02279} {arXiv:2312.02279 [quant-ph]}  (\bibinfo
  {year} {2023})\BibitemShut {NoStop}%
\bibitem [{\citenamefont {Montanaro}(2015)}]{Montanaro}%
  \BibitemOpen
  \bibfield  {author} {\bibinfo {author} {\bibfnamefont {A.}~\bibnamefont
  {Montanaro}},\ }\bibfield  {title} {\bibinfo {title} {Quantum algorithms: An
  overview},\ }\href {https://doi.org/10.1038/npjqi.2015.23} {\bibfield
  {journal} {\bibinfo  {journal} {NPJ Quantum Information}\ }\textbf {\bibinfo
  {volume} {2}},\ \bibinfo {pages} {15023} (\bibinfo {year}
  {2015})}\BibitemShut {NoStop}%
\bibitem [{\citenamefont {Durr}\ and\ \citenamefont
  {Hoyer}(1996)}]{durr1996quantum}%
  \BibitemOpen
  \bibfield  {author} {\bibinfo {author} {\bibfnamefont {C.}~\bibnamefont
  {Durr}}\ and\ \bibinfo {author} {\bibfnamefont {P.}~\bibnamefont {Hoyer}},\
  }\bibfield  {title} {\bibinfo {title} {A quantum algorithm for finding the
  minimum},\ }\Eprint {https://arxiv.org/abs/quant-ph/9607014}
  {arXiv:quant-ph/9607014}  (\bibinfo {year} {1996})\BibitemShut {NoStop}%
\bibitem [{\citenamefont {Ambainis}\ \emph {et~al.}(2019)\citenamefont
  {Ambainis}, \citenamefont {Balodis}, \citenamefont {Iraids}, \citenamefont
  {Kokainis}, \citenamefont {Pr{\=u}sis},\ and\ \citenamefont
  {Vihrovs}}]{ambainis2019quantum}%
  \BibitemOpen
  \bibfield  {author} {\bibinfo {author} {\bibfnamefont {A.}~\bibnamefont
  {Ambainis}}, \bibinfo {author} {\bibfnamefont {K.}~\bibnamefont {Balodis}},
  \bibinfo {author} {\bibfnamefont {J.}~\bibnamefont {Iraids}}, \bibinfo
  {author} {\bibfnamefont {M.}~\bibnamefont {Kokainis}}, \bibinfo {author}
  {\bibfnamefont {K.}~\bibnamefont {Pr{\=u}sis}},\ and\ \bibinfo {author}
  {\bibfnamefont {J.}~\bibnamefont {Vihrovs}},\ }\bibfield  {title} {\bibinfo
  {title} {Quantum speedups for exponential-time dynamic programming
  algorithms},\ }in\ \href
  {https://epubs.siam.org/doi/10.1137/1.9781611975482.107} {\emph {\bibinfo
  {booktitle} {Proceedings of the Thirtieth Annual ACM-SIAM Symposium on
  Discrete Algorithms}}}\ (\bibinfo {organization} {SIAM},\ \bibinfo {year}
  {2019})\ pp.\ \bibinfo {pages} {1783--1793}\BibitemShut {NoStop}%
\bibitem [{\citenamefont {Farhi}\ \emph {et~al.}(2000)\citenamefont {Farhi},
  \citenamefont {Goldstone}, \citenamefont {Gutmann},\ and\ \citenamefont
  {Sipser}}]{Farhi00}%
  \BibitemOpen
  \bibfield  {author} {\bibinfo {author} {\bibfnamefont {E.}~\bibnamefont
  {Farhi}}, \bibinfo {author} {\bibfnamefont {J.}~\bibnamefont {Goldstone}},
  \bibinfo {author} {\bibfnamefont {S.}~\bibnamefont {Gutmann}},\ and\ \bibinfo
  {author} {\bibfnamefont {M.}~\bibnamefont {Sipser}},\ }\bibfield  {title}
  {\bibinfo {title} {Quantum computation by adiabatic evolution},\ }\Eprint
  {https://arxiv.org/abs/quant-ph/0001106} {arXiv:quant-ph/0001106}  (\bibinfo
  {year} {2000})\BibitemShut {NoStop}%
\bibitem [{\citenamefont {Albash}\ and\ \citenamefont
  {Lidar}(2018)}]{Albash_AQC}%
  \BibitemOpen
  \bibfield  {author} {\bibinfo {author} {\bibfnamefont {T.}~\bibnamefont
  {Albash}}\ and\ \bibinfo {author} {\bibfnamefont {D.~A.}\ \bibnamefont
  {Lidar}},\ }\bibfield  {title} {\bibinfo {title} {Adiabatic quantum
  computation},\ }\href {https://doi.org/10.1103/RevModPhys.90.015002}
  {\bibfield  {journal} {\bibinfo  {journal} {Rev. Mod. Phys.}\ }\textbf
  {\bibinfo {volume} {90}},\ \bibinfo {pages} {015002} (\bibinfo {year}
  {2018})}\BibitemShut {NoStop}%
\bibitem [{\citenamefont {Lang}\ \emph {et~al.}(2022)\citenamefont {Lang},
  \citenamefont {Zielinski},\ and\ \citenamefont {Feld}}]{LangZielinskiFeld}%
  \BibitemOpen
  \bibfield  {author} {\bibinfo {author} {\bibfnamefont {J.}~\bibnamefont
  {Lang}}, \bibinfo {author} {\bibfnamefont {S.}~\bibnamefont {Zielinski}},\
  and\ \bibinfo {author} {\bibfnamefont {S.}~\bibnamefont {Feld}},\ }\bibfield
  {title} {\bibinfo {title} {Strategic portfolio optimization using simulated,
  digital, and quantum annealing},\ }\href
  {https://doi.org/10.3390/app122312288} {\bibfield  {journal} {\bibinfo
  {journal} {Applied Sciences}\ }\textbf {\bibinfo {volume} {12}},\ \bibinfo
  {pages} {12288} (\bibinfo {year} {2022})}\BibitemShut {NoStop}%
\bibitem [{\citenamefont {Salatino}\ \emph {et~al.}(2025)\citenamefont
  {Salatino}, \citenamefont {Matzler}, \citenamefont {Scocco}, \citenamefont
  {Lucignano},\ and\ \citenamefont {Passarelli}}]{Salatino_25}%
  \BibitemOpen
  \bibfield  {author} {\bibinfo {author} {\bibfnamefont {G.}~\bibnamefont
  {Salatino}}, \bibinfo {author} {\bibfnamefont {M.}~\bibnamefont {Matzler}},
  \bibinfo {author} {\bibfnamefont {A.}~\bibnamefont {Scocco}}, \bibinfo
  {author} {\bibfnamefont {P.}~\bibnamefont {Lucignano}},\ and\ \bibinfo
  {author} {\bibfnamefont {G.}~\bibnamefont {Passarelli}},\ }\bibfield  {title}
  {\bibinfo {title} {Noise effects on diabatic quantum annealing protocols},\
  }\Eprint {https://arxiv.org/abs/2502.07588} {arXiv:2502.07588 [quant-ph]}
  (\bibinfo {year} {2025})\BibitemShut {NoStop}%
\bibitem [{\citenamefont {Nagies}\ \emph {et~al.}(2025)\citenamefont {Nagies},
  \citenamefont {Geier}, \citenamefont {Akram}, \citenamefont {Bantounas},
  \citenamefont {Johanning},\ and\ \citenamefont {Hauke}}]{Geier_25}%
  \BibitemOpen
  \bibfield  {author} {\bibinfo {author} {\bibfnamefont {S.}~\bibnamefont
  {Nagies}}, \bibinfo {author} {\bibfnamefont {K.~T.}\ \bibnamefont {Geier}},
  \bibinfo {author} {\bibfnamefont {J.}~\bibnamefont {Akram}}, \bibinfo
  {author} {\bibfnamefont {D.}~\bibnamefont {Bantounas}}, \bibinfo {author}
  {\bibfnamefont {M.}~\bibnamefont {Johanning}},\ and\ \bibinfo {author}
  {\bibfnamefont {P.}~\bibnamefont {Hauke}},\ }\bibfield  {title} {\bibinfo
  {title} {Boosting quantum annealing performance through direct polynomial
  unconstrained binary optimization},\ }\href
  {https://doi.org/10.1088/2058-9565/adcae6} {\bibfield  {journal} {\bibinfo
  {journal} {Quantum Science and Technology}\ }\textbf {\bibinfo {volume}
  {10}},\ \bibinfo {pages} {035008} (\bibinfo {year} {2025})}\BibitemShut
  {NoStop}%
\bibitem [{\citenamefont {Hegde}\ \emph {et~al.}(2022)\citenamefont {Hegde},
  \citenamefont {Passarelli}, \citenamefont {Scocco},\ and\ \citenamefont
  {Lucignano}}]{Hegde_2022}%
  \BibitemOpen
  \bibfield  {author} {\bibinfo {author} {\bibfnamefont {P.~R.}\ \bibnamefont
  {Hegde}}, \bibinfo {author} {\bibfnamefont {G.}~\bibnamefont {Passarelli}},
  \bibinfo {author} {\bibfnamefont {A.}~\bibnamefont {Scocco}},\ and\ \bibinfo
  {author} {\bibfnamefont {P.}~\bibnamefont {Lucignano}},\ }\bibfield  {title}
  {\bibinfo {title} {Genetic optimization of quantum annealing},\ }\bibfield
  {journal} {\bibinfo  {journal} {Physical Review A}\ }\href
  {https://doi.org/10.1103/physreva.105.012612} {10.1103/physreva.105.012612}
  (\bibinfo {year} {2022})\BibitemShut {NoStop}%
\bibitem [{\citenamefont {McArdle}\ \emph {et~al.}(2019)\citenamefont
  {McArdle}, \citenamefont {Jones}, \citenamefont {Endo}, \citenamefont {Li},
  \citenamefont {Benjamin},\ and\ \citenamefont {Yuan}}]{VariationalQuITE19}%
  \BibitemOpen
  \bibfield  {author} {\bibinfo {author} {\bibfnamefont {S.}~\bibnamefont
  {McArdle}}, \bibinfo {author} {\bibfnamefont {T.}~\bibnamefont {Jones}},
  \bibinfo {author} {\bibfnamefont {S.}~\bibnamefont {Endo}}, \bibinfo {author}
  {\bibfnamefont {Y.}~\bibnamefont {Li}}, \bibinfo {author} {\bibfnamefont
  {S.}~\bibnamefont {Benjamin}},\ and\ \bibinfo {author} {\bibfnamefont
  {X.}~\bibnamefont {Yuan}},\ }\bibfield  {title} {\bibinfo {title}
  {Variational ansatz-based quantum simulation of imaginary time evolution},\
  }\href {https://doi.org/10.1038/s41534-019-0187-2} {\bibfield  {journal}
  {\bibinfo  {journal} {NPJ Quantum Information}\ }\textbf {\bibinfo {volume}
  {5}},\ \bibinfo {pages} {75} (\bibinfo {year} {2019})}\BibitemShut {NoStop}%
\bibitem [{\citenamefont {Motta}\ \emph {et~al.}(2020)\citenamefont {Motta},
  \citenamefont {Sun}, \citenamefont {Tan}, \citenamefont {O'Rourke},
  \citenamefont {Ye}, \citenamefont {Minnich}, \citenamefont {Brand{\~{a}}o},\
  and\ \citenamefont {Chan}}]{motta2020}%
  \BibitemOpen
  \bibfield  {author} {\bibinfo {author} {\bibfnamefont {M.}~\bibnamefont
  {Motta}}, \bibinfo {author} {\bibfnamefont {C.}~\bibnamefont {Sun}}, \bibinfo
  {author} {\bibfnamefont {A.~T.~K.}\ \bibnamefont {Tan}}, \bibinfo {author}
  {\bibfnamefont {M.~J.}\ \bibnamefont {O'Rourke}}, \bibinfo {author}
  {\bibfnamefont {E.}~\bibnamefont {Ye}}, \bibinfo {author} {\bibfnamefont
  {A.~J.}\ \bibnamefont {Minnich}}, \bibinfo {author} {\bibfnamefont {F.~G.
  S.~L.}\ \bibnamefont {Brand{\~{a}}o}},\ and\ \bibinfo {author} {\bibfnamefont
  {G.~K.-L.}\ \bibnamefont {Chan}},\ }\bibfield  {title} {\bibinfo {title}
  {Determining eigenstates and thermal states on a quantum computer using
  quantum imaginary time evolution},\ }\href
  {https://doi.org/10.1038/s41567-019-0704-4} {\bibfield  {journal} {\bibinfo
  {journal} {Nat. Phys.}\ }\textbf {\bibinfo {volume} {16}},\ \bibinfo {pages}
  {205} (\bibinfo {year} {2020})}\BibitemShut {NoStop}%
\bibitem [{\citenamefont {Nishi}\ \emph {et~al.}(2021)\citenamefont {Nishi},
  \citenamefont {Kosugi},\ and\ \citenamefont {Matsushita}}]{Nishi21}%
  \BibitemOpen
  \bibfield  {author} {\bibinfo {author} {\bibfnamefont {H.}~\bibnamefont
  {Nishi}}, \bibinfo {author} {\bibfnamefont {T.}~\bibnamefont {Kosugi}},\ and\
  \bibinfo {author} {\bibfnamefont {Y.}~\bibnamefont {Matsushita}},\ }\bibfield
   {title} {\bibinfo {title} {Implementation of quantum imaginary-time
  evolution method on nisq devices by introducing nonlocal approximation},\
  }\href {https://doi.org/10.1038/s41534-021-00409-y} {\bibfield  {journal}
  {\bibinfo  {journal} {NPJ Quantum Information}\ }\textbf {\bibinfo {volume}
  {7}},\ \bibinfo {pages} {85} (\bibinfo {year} {2021})}\BibitemShut {NoStop}%
\bibitem [{\citenamefont {Poulin}\ and\ \citenamefont {Wocjan}(2009)}]{PW09}%
  \BibitemOpen
  \bibfield  {author} {\bibinfo {author} {\bibfnamefont {D.}~\bibnamefont
  {Poulin}}\ and\ \bibinfo {author} {\bibfnamefont {P.}~\bibnamefont
  {Wocjan}},\ }\bibfield  {title} {\bibinfo {title} {{Sampling from the Thermal
  Quantum Gibbs State and Evaluating Partition Functions with a Quantum
  Computer}},\ }\href {https://doi.org/10.1103/PhysRevLett.103.220502}
  {\bibfield  {journal} {\bibinfo  {journal} {Physical Review Letters}\
  }\textbf {\bibinfo {volume} {103}},\ \bibinfo {pages} {220502} (\bibinfo
  {year} {2009})}\BibitemShut {NoStop}%
\bibitem [{\citenamefont {Chowdhury}\ and\ \citenamefont
  {Somma}(2017)}]{chowdhury2017}%
  \BibitemOpen
  \bibfield  {author} {\bibinfo {author} {\bibfnamefont {A.~N.}\ \bibnamefont
  {Chowdhury}}\ and\ \bibinfo {author} {\bibfnamefont {R.~D.}\ \bibnamefont
  {Somma}},\ }\bibfield  {title} {\bibinfo {title} {Quantum algorithms for
  gibbs sampling and hitting-time estimation},\ }\href
  {https://doi.org/10.48550/arXiv.1603.02940} {\bibfield  {journal} {\bibinfo
  {journal} {Quantum Inf. Comput.}\ }\textbf {\bibinfo {volume} {17}},\
  \bibinfo {pages} {41} (\bibinfo {year} {2017})}\BibitemShut {NoStop}%
\bibitem [{\citenamefont {Wang}\ \emph {et~al.}(2021)\citenamefont {Wang},
  \citenamefont {Li},\ and\ \citenamefont {Wang}}]{variationalGibbssampler20}%
  \BibitemOpen
  \bibfield  {author} {\bibinfo {author} {\bibfnamefont {Y.}~\bibnamefont
  {Wang}}, \bibinfo {author} {\bibfnamefont {G.}~\bibnamefont {Li}},\ and\
  \bibinfo {author} {\bibfnamefont {X.}~\bibnamefont {Wang}},\ }\bibfield
  {title} {\bibinfo {title} {Variational quantum gibbs state preparation with a
  truncated taylor series},\ }\href
  {https://doi.org/10.1103/PhysRevApplied.16.054035} {\bibfield  {journal}
  {\bibinfo  {journal} {Phys. Rev. Appl.}\ }\textbf {\bibinfo {volume} {16}},\
  \bibinfo {pages} {054035} (\bibinfo {year} {2021})}\BibitemShut {NoStop}%
\bibitem [{\citenamefont {Silva}\ \emph {et~al.}(2023)\citenamefont {Silva},
  \citenamefont {Taddei}, \citenamefont {Carrazza},\ and\ \citenamefont
  {Aolita}}]{Thais_QITE}%
  \BibitemOpen
  \bibfield  {author} {\bibinfo {author} {\bibfnamefont {T.~d.~L.}\
  \bibnamefont {Silva}}, \bibinfo {author} {\bibfnamefont {M.~M.}\ \bibnamefont
  {Taddei}}, \bibinfo {author} {\bibfnamefont {S.}~\bibnamefont {Carrazza}},\
  and\ \bibinfo {author} {\bibfnamefont {L.}~\bibnamefont {Aolita}},\
  }\bibfield  {title} {\bibinfo {title} {Fragmented imaginary-time evolution
  for early-stage quantum signal processors},\ }\href
  {https://doi.org/10.1038/s41598-023-45540-2} {\bibfield  {journal} {\bibinfo
  {journal} {Scientific Reports}\ }\textbf {\bibinfo {volume} {13}},\ \bibinfo
  {pages} {18258} (\bibinfo {year} {2023})}\BibitemShut {NoStop}%
\bibitem [{\citenamefont {Kyaw}\ \emph {et~al.}(2023)\citenamefont {Kyaw},
  \citenamefont {Soley}, \citenamefont {Allen}, \citenamefont {Bergold},
  \citenamefont {Sun}, \citenamefont {Batista},\ and\ \citenamefont
  {Aspuru-Guzik}}]{Kyaw_2024}%
  \BibitemOpen
  \bibfield  {author} {\bibinfo {author} {\bibfnamefont {T.~H.}\ \bibnamefont
  {Kyaw}}, \bibinfo {author} {\bibfnamefont {M.~B.}\ \bibnamefont {Soley}},
  \bibinfo {author} {\bibfnamefont {B.}~\bibnamefont {Allen}}, \bibinfo
  {author} {\bibfnamefont {P.}~\bibnamefont {Bergold}}, \bibinfo {author}
  {\bibfnamefont {C.}~\bibnamefont {Sun}}, \bibinfo {author} {\bibfnamefont
  {V.~S.}\ \bibnamefont {Batista}},\ and\ \bibinfo {author} {\bibfnamefont
  {A.}~\bibnamefont {Aspuru-Guzik}},\ }\bibfield  {title} {\bibinfo {title}
  {Boosting quantum amplitude exponentially in variational quantum
  algorithms},\ }\href {https://doi.org/10.1088/2058-9565/acf4ba} {\bibfield
  {journal} {\bibinfo  {journal} {Quantum Science and Technology}\ }\textbf
  {\bibinfo {volume} {9}},\ \bibinfo {pages} {01LT01} (\bibinfo {year}
  {2023})}\BibitemShut {NoStop}%
\bibitem [{\citenamefont {Farhi}\ \emph {et~al.}(2014)\citenamefont {Farhi},
  \citenamefont {Goldstone},\ and\ \citenamefont {Gutmann}}]{Farhi_QAOA}%
  \BibitemOpen
  \bibfield  {author} {\bibinfo {author} {\bibfnamefont {E.}~\bibnamefont
  {Farhi}}, \bibinfo {author} {\bibfnamefont {J.}~\bibnamefont {Goldstone}},\
  and\ \bibinfo {author} {\bibfnamefont {S.}~\bibnamefont {Gutmann}},\
  }\bibfield  {title} {\bibinfo {title} {A quantum approximate optimization
  algorithm},\ }\Eprint {https://arxiv.org/abs/1411.4028} {arXiv:1411.4028
  [quant-ph]}  (\bibinfo {year} {2014})\BibitemShut {NoStop}%
\bibitem [{\citenamefont {Basso}\ \emph {et~al.}(2022)\citenamefont {Basso},
  \citenamefont {Farhi}, \citenamefont {Marwaha}, \citenamefont {Villalonga},\
  and\ \citenamefont {Zhou}}]{basso_et_al:LIPIcs.TQC.2022.7}%
  \BibitemOpen
  \bibfield  {author} {\bibinfo {author} {\bibfnamefont {J.}~\bibnamefont
  {Basso}}, \bibinfo {author} {\bibfnamefont {E.}~\bibnamefont {Farhi}},
  \bibinfo {author} {\bibfnamefont {K.}~\bibnamefont {Marwaha}}, \bibinfo
  {author} {\bibfnamefont {B.}~\bibnamefont {Villalonga}},\ and\ \bibinfo
  {author} {\bibfnamefont {L.}~\bibnamefont {Zhou}},\ }\bibfield  {title}
  {\bibinfo {title} {{The Quantum Approximate Optimization Algorithm at High
  Depth for MaxCut on Large-Girth Regular Graphs and the
  Sherrington-Kirkpatrick Model}},\ }in\ \href
  {https://doi.org/10.4230/LIPIcs.TQC.2022.7} {\emph {\bibinfo {booktitle}
  {17th Conference on the Theory of Quantum Computation, Communication and
  Cryptography (TQC 2022)}}},\ Vol.\ \bibinfo {volume} {232}\ (\bibinfo {year}
  {2022})\ pp.\ \bibinfo {pages} {7:1--7:21}\BibitemShut {NoStop}%
\bibitem [{\citenamefont {He}\ \emph {et~al.}(2023)\citenamefont {He},
  \citenamefont {Shaydulin}, \citenamefont {Chakrabarti}, \citenamefont
  {Herman}, \citenamefont {Li}, \citenamefont {Sun},\ and\ \citenamefont
  {Pistoia}}]{He_QAOA_PO}%
  \BibitemOpen
  \bibfield  {author} {\bibinfo {author} {\bibfnamefont {Z.}~\bibnamefont
  {He}}, \bibinfo {author} {\bibfnamefont {R.}~\bibnamefont {Shaydulin}},
  \bibinfo {author} {\bibfnamefont {S.}~\bibnamefont {Chakrabarti}}, \bibinfo
  {author} {\bibfnamefont {D.}~\bibnamefont {Herman}}, \bibinfo {author}
  {\bibfnamefont {C.}~\bibnamefont {Li}}, \bibinfo {author} {\bibfnamefont
  {Y.}~\bibnamefont {Sun}},\ and\ \bibinfo {author} {\bibfnamefont
  {M.}~\bibnamefont {Pistoia}},\ }\bibfield  {title} {\bibinfo {title}
  {Alignment between initial state and mixer improves qaoa performance for
  constrained optimization},\ }\href
  {https://doi.org/10.1038/s41534-023-00787-5} {\bibfield  {journal} {\bibinfo
  {journal} {npj Quantum Information}\ }\textbf {\bibinfo {volume} {9}},\
  \bibinfo {pages} {121} (\bibinfo {year} {2023})}\BibitemShut {NoStop}%
\bibitem [{\citenamefont {Goto}\ \emph {et~al.}(2019)\citenamefont {Goto},
  \citenamefont {Tatsumura},\ and\ \citenamefont {Dixon}}]{Goto_simadiabatic}%
  \BibitemOpen
  \bibfield  {author} {\bibinfo {author} {\bibfnamefont {H.}~\bibnamefont
  {Goto}}, \bibinfo {author} {\bibfnamefont {K.}~\bibnamefont {Tatsumura}},\
  and\ \bibinfo {author} {\bibfnamefont {A.~R.}\ \bibnamefont {Dixon}},\
  }\bibfield  {title} {\bibinfo {title} {Combinatorial optimization by
  simulating adiabatic bifurcations in nonlinear hamiltonian systems},\ }\href
  {https://doi.org/10.1126/sciadv.aav2372} {\bibfield  {journal} {\bibinfo
  {journal} {Science Advances}\ }\textbf {\bibinfo {volume} {5}},\ \bibinfo
  {pages} {eaav2372} (\bibinfo {year} {2019})}\BibitemShut {NoStop}%
\bibitem [{\citenamefont {Kanao}\ and\ \citenamefont
  {Goto}(2022)}]{Kanao_simbifurcation}%
  \BibitemOpen
  \bibfield  {author} {\bibinfo {author} {\bibfnamefont {T.}~\bibnamefont
  {Kanao}}\ and\ \bibinfo {author} {\bibfnamefont {H.}~\bibnamefont {Goto}},\
  }\bibfield  {title} {\bibinfo {title} {Simulated bifurcation assisted by
  thermal fluctuation},\ }\href {https://doi.org/10.1038/s42005-022-00929-9}
  {\bibfield  {journal} {\bibinfo  {journal} {Nature Communications Physics}\
  }\textbf {\bibinfo {volume} {5}},\ \bibinfo {pages} {153} (\bibinfo {year}
  {2022})}\BibitemShut {NoStop}%
\bibitem [{\citenamefont {Mohseni}\ \emph {et~al.}(2022)\citenamefont
  {Mohseni}, \citenamefont {McMahon},\ and\ \citenamefont
  {Byrnes}}]{Mohseni_Isingmach}%
  \BibitemOpen
  \bibfield  {author} {\bibinfo {author} {\bibfnamefont {N.}~\bibnamefont
  {Mohseni}}, \bibinfo {author} {\bibfnamefont {P.~L.}\ \bibnamefont
  {McMahon}},\ and\ \bibinfo {author} {\bibfnamefont {T.}~\bibnamefont
  {Byrnes}},\ }\bibfield  {title} {\bibinfo {title} {Ising machines as hardware
  solvers of combinatorial optimization problems},\ }\href
  {https://doi.org/10.1038/s42254-022-00440-8} {\bibfield  {journal} {\bibinfo
  {journal} {Nature Review Physics}\ }\textbf {\bibinfo {volume} {4}},\
  \bibinfo {pages} {363–379} (\bibinfo {year} {2022})}\BibitemShut {NoStop}%
\bibitem [{\citenamefont {{Qiskit documentation}}(2022)}]{qiskit_doc}%
  \BibitemOpen
  \bibfield  {author} {\bibinfo {author} {\bibnamefont {{Qiskit
  documentation}}},\ }\href
  {https://qiskit.org/ecosystem/optimization/tutorials/02_converters_for_quadratic_programs.html#LinearEqualityToPenalty}
  {\bibinfo {title} {Converters for quadratic programs -
  linearequalitytopenalty}} (\bibinfo {year} {2022})\BibitemShut {NoStop}%
\bibitem [{\citenamefont {Iosue}(2020)}]{qubovert}%
  \BibitemOpen
  \bibfield  {author} {\bibinfo {author} {\bibfnamefont {J.~T.}\ \bibnamefont
  {Iosue}},\ }\href {https://qubovert.readthedocs.io/en/latest/} {\bibinfo
  {title} {Welcome to qubovert's documentation!}} (\bibinfo {year}
  {2020})\BibitemShut {NoStop}%
\bibitem [{\citenamefont {Zaman}\ \emph {et~al.}(2021)\citenamefont {Zaman},
  \citenamefont {Tanahashi},\ and\ \citenamefont {Tanaka}}]{zaman2021pyqubo}%
  \BibitemOpen
  \bibfield  {author} {\bibinfo {author} {\bibfnamefont {M.}~\bibnamefont
  {Zaman}}, \bibinfo {author} {\bibfnamefont {K.}~\bibnamefont {Tanahashi}},\
  and\ \bibinfo {author} {\bibfnamefont {S.}~\bibnamefont {Tanaka}},\
  }\bibfield  {title} {\bibinfo {title} {Pyqubo: Python library for mapping
  combinatorial optimization problems to qubo form},\ }\href
  {https://doi.org/10.1109/TC.2021.3063618} {\bibfield  {journal} {\bibinfo
  {journal} {IEEE Transactions on Computers}\ }\textbf {\bibinfo {volume}
  {71}},\ \bibinfo {pages} {838} (\bibinfo {year} {2021})}\BibitemShut
  {NoStop}%
\bibitem [{\citenamefont {Karimi}\ and\ \citenamefont
  {Ronagh}(2019)}]{karimi2019practical}%
  \BibitemOpen
  \bibfield  {author} {\bibinfo {author} {\bibfnamefont {S.}~\bibnamefont
  {Karimi}}\ and\ \bibinfo {author} {\bibfnamefont {P.}~\bibnamefont
  {Ronagh}},\ }\bibfield  {title} {\bibinfo {title} {Practical
  integer-to-binary mapping for quantum annealers},\ }\href
  {https://doi.org/10.1007/s11128-019-2213-x} {\bibfield  {journal} {\bibinfo
  {journal} {Quantum Information Processing}\ }\textbf {\bibinfo {volume}
  {18}},\ \bibinfo {pages} {1} (\bibinfo {year} {2019})}\BibitemShut {NoStop}%
\bibitem [{\citenamefont {Harwood}\ \emph {et~al.}(2021)\citenamefont
  {Harwood}, \citenamefont {Gambella}, \citenamefont {Trenev}, \citenamefont
  {Simonetto}, \citenamefont {Bernal},\ and\ \citenamefont
  {Greenberg}}]{Harwood_routing}%
  \BibitemOpen
  \bibfield  {author} {\bibinfo {author} {\bibfnamefont {S.}~\bibnamefont
  {Harwood}}, \bibinfo {author} {\bibfnamefont {C.}~\bibnamefont {Gambella}},
  \bibinfo {author} {\bibfnamefont {D.}~\bibnamefont {Trenev}}, \bibinfo
  {author} {\bibfnamefont {A.}~\bibnamefont {Simonetto}}, \bibinfo {author}
  {\bibfnamefont {D.}~\bibnamefont {Bernal}},\ and\ \bibinfo {author}
  {\bibfnamefont {D.}~\bibnamefont {Greenberg}},\ }\bibfield  {title} {\bibinfo
  {title} {Formulating and solving routing problems on quantum computers},\
  }\href {https://doi.org/10.1109/TQE.2021.3049230} {\bibfield  {journal}
  {\bibinfo  {journal} {IEEE Transactions on Quantum Engineering}\ }\textbf
  {\bibinfo {volume} {2}},\ \bibinfo {pages} {1} (\bibinfo {year}
  {2021})}\BibitemShut {NoStop}%
\bibitem [{\citenamefont {Azad}\ \emph {et~al.}(2023)\citenamefont {Azad},
  \citenamefont {Behera}, \citenamefont {Ahmed}, \citenamefont {Panigrahi},\
  and\ \citenamefont {Farouk}}]{azad2022solving}%
  \BibitemOpen
  \bibfield  {author} {\bibinfo {author} {\bibfnamefont {U.}~\bibnamefont
  {Azad}}, \bibinfo {author} {\bibfnamefont {B.~K.}\ \bibnamefont {Behera}},
  \bibinfo {author} {\bibfnamefont {E.~A.}\ \bibnamefont {Ahmed}}, \bibinfo
  {author} {\bibfnamefont {P.~K.}\ \bibnamefont {Panigrahi}},\ and\ \bibinfo
  {author} {\bibfnamefont {A.}~\bibnamefont {Farouk}},\ }\bibfield  {title}
  {\bibinfo {title} {Solving vehicle routing problem using quantum approximate
  optimization algorithm},\ }\href {https://doi.org/10.1109/TITS.2022.3172241}
  {\bibfield  {journal} {\bibinfo  {journal} {IEEE Transactions on Intelligent
  Transportation Systems}\ }\textbf {\bibinfo {volume} {24}},\ \bibinfo {pages}
  {7564} (\bibinfo {year} {2023})}\BibitemShut {NoStop}%
\bibitem [{\citenamefont {Leonidas}\ \emph {et~al.}(2024)\citenamefont
  {Leonidas}, \citenamefont {Dukakis}, \citenamefont {Tan},\ and\ \citenamefont
  {Angelakis}}]{Leonidas_vehiclerouting}%
  \BibitemOpen
  \bibfield  {author} {\bibinfo {author} {\bibfnamefont {I.~D.}\ \bibnamefont
  {Leonidas}}, \bibinfo {author} {\bibfnamefont {A.}~\bibnamefont {Dukakis}},
  \bibinfo {author} {\bibfnamefont {B.}~\bibnamefont {Tan}},\ and\ \bibinfo
  {author} {\bibfnamefont {D.~G.}\ \bibnamefont {Angelakis}},\ }\bibfield
  {title} {\bibinfo {title} {Qubit efficient quantum algorithms for the vehicle
  routing problem on noisy intermediate-scale quantum processors},\ }\href
  {https://doi.org/https://doi.org/10.1002/qute.202300309} {\bibfield
  {journal} {\bibinfo  {journal} {Advanced Quantum Technologies}\ }\textbf
  {\bibinfo {volume} {7}},\ \bibinfo {pages} {2300309} (\bibinfo {year}
  {2024})}\BibitemShut {NoStop}%
\bibitem [{\citenamefont {Goemans}\ and\ \citenamefont
  {Williamson}(1995)}]{goemans1995improved}%
  \BibitemOpen
  \bibfield  {author} {\bibinfo {author} {\bibfnamefont {M.~X.}\ \bibnamefont
  {Goemans}}\ and\ \bibinfo {author} {\bibfnamefont {D.~P.}\ \bibnamefont
  {Williamson}},\ }\bibfield  {title} {\bibinfo {title} {Improved approximation
  algorithms for maximum cut and satisfiability problems using semidefinite
  programming},\ }\href {https://doi.org/10.1145/227683.227684} {\bibfield
  {journal} {\bibinfo  {journal} {Journal of the ACM (JACM)}\ }\textbf
  {\bibinfo {volume} {42}},\ \bibinfo {pages} {1115} (\bibinfo {year}
  {1995})}\BibitemShut {NoStop}%
\bibitem [{\citenamefont {Ebadi}\ \emph {et~al.}(2022)\citenamefont {Ebadi},
  \citenamefont {Keesling}, \citenamefont {Cain} \emph {et~al.}}]{rydberg}%
  \BibitemOpen
  \bibfield  {author} {\bibinfo {author} {\bibfnamefont {S.}~\bibnamefont
  {Ebadi}}, \bibinfo {author} {\bibfnamefont {A.}~\bibnamefont {Keesling}},
  \bibinfo {author} {\bibfnamefont {M.}~\bibnamefont {Cain}}, \emph {et~al.},\
  }\bibfield  {title} {\bibinfo {title} {Quantum optimization of maximum
  independent set using rydberg atom arrays},\ }\href
  {https://doi.org/10.1126/science.abo6587} {\bibfield  {journal} {\bibinfo
  {journal} {Science}\ }\textbf {\bibinfo {volume} {376}},\ \bibinfo {pages}
  {1209} (\bibinfo {year} {2022})}\BibitemShut {NoStop}%
\bibitem [{\citenamefont {Lucas}(2014)}]{Lucas_ising_form}%
  \BibitemOpen
  \bibfield  {author} {\bibinfo {author} {\bibfnamefont {A.}~\bibnamefont
  {Lucas}},\ }\bibfield  {title} {\bibinfo {title} {Ising formulations of many
  np problems},\ }\href {https://doi.org/10.3389/fphy.2014.00005} {\bibfield
  {journal} {\bibinfo  {journal} {Frontiers in Physics}\ }\textbf {\bibinfo
  {volume} {2}},\ \bibinfo {pages} {5} (\bibinfo {year} {2014})}\BibitemShut
  {NoStop}%
\bibitem [{\citenamefont {Ayodele}(2022)}]{ayodele2022penalties}%
  \BibitemOpen
  \bibfield  {author} {\bibinfo {author} {\bibfnamefont {M.}~\bibnamefont
  {Ayodele}},\ }\bibfield  {title} {\bibinfo {title} {Penalty weights in qubo
  formulations: Permutation problems},\ }in\ \href
  {https://doi.org/10.1007/978-3-031-04148-8_11} {\emph {\bibinfo {booktitle}
  {Evolutionary Computation in Combinatorial Optimization}}},\ Vol.\ \bibinfo
  {volume} {13222}\ (\bibinfo  {publisher} {Springer International
  Publishing},\ \bibinfo {year} {2022})\ pp.\ \bibinfo {pages}
  {159--174}\BibitemShut {NoStop}%
\bibitem [{\citenamefont {Markowitz}(1952)}]{Markowitz_PO}%
  \BibitemOpen
  \bibfield  {author} {\bibinfo {author} {\bibfnamefont {H.}~\bibnamefont
  {Markowitz}},\ }\bibfield  {title} {\bibinfo {title} {Portfolio selection},\
  }\href {http://www.jstor.org/stable/2975974} {\bibfield  {journal} {\bibinfo
  {journal} {The Journal of Finance}\ }\textbf {\bibinfo {volume} {7}},\
  \bibinfo {pages} {77} (\bibinfo {year} {1952})}\BibitemShut {NoStop}%
\bibitem [{\citenamefont {Grant}\ \emph {et~al.}(2021)\citenamefont {Grant},
  \citenamefont {Humble},\ and\ \citenamefont {Stump}}]{grant2021markovitz}%
  \BibitemOpen
  \bibfield  {author} {\bibinfo {author} {\bibfnamefont {E.}~\bibnamefont
  {Grant}}, \bibinfo {author} {\bibfnamefont {T.~S.}\ \bibnamefont {Humble}},\
  and\ \bibinfo {author} {\bibfnamefont {B.}~\bibnamefont {Stump}},\ }\bibfield
   {title} {\bibinfo {title} {Benchmarking quantum annealing controls with
  portfolio optimization},\ }\href
  {https://doi.org/10.1103/PhysRevApplied.15.014012} {\bibfield  {journal}
  {\bibinfo  {journal} {Phys. Rev. Applied}\ }\textbf {\bibinfo {volume}
  {15}},\ \bibinfo {pages} {014012} (\bibinfo {year} {2021})}\BibitemShut
  {NoStop}%
\bibitem [{\citenamefont {Rosenberg}\ \emph {et~al.}(2016)\citenamefont
  {Rosenberg}, \citenamefont {Haghnegahdar}, \citenamefont {Goddard},
  \citenamefont {Carr}, \citenamefont {Wu},\ and\ \citenamefont
  {de~Prado}}]{rosenberg2016risk}%
  \BibitemOpen
  \bibfield  {author} {\bibinfo {author} {\bibfnamefont {G.}~\bibnamefont
  {Rosenberg}}, \bibinfo {author} {\bibfnamefont {P.}~\bibnamefont
  {Haghnegahdar}}, \bibinfo {author} {\bibfnamefont {P.}~\bibnamefont
  {Goddard}}, \bibinfo {author} {\bibfnamefont {P.}~\bibnamefont {Carr}},
  \bibinfo {author} {\bibfnamefont {K.}~\bibnamefont {Wu}},\ and\ \bibinfo
  {author} {\bibfnamefont {M.~L.}\ \bibnamefont {de~Prado}},\ }\bibfield
  {title} {\bibinfo {title} {Solving the optimal trading trajectory problem
  using a quantum annealer},\ }\href
  {https://doi.org/10.1109/JSTSP.2016.2574703} {\bibfield  {journal} {\bibinfo
  {journal} {IEEE Journal of Selected Topics in Signal Processing}\ }\textbf
  {\bibinfo {volume} {10}},\ \bibinfo {pages} {1053} (\bibinfo {year}
  {2016})}\BibitemShut {NoStop}%
\bibitem [{ion(2023)}]{ionq}%
  \BibitemOpen
  \href {https://ionq.com/} {\bibinfo {title} {Ionq trapped ion quantum
  computing}} (\bibinfo {year} {2023})\BibitemShut {NoStop}%
\bibitem [{\citenamefont {Yurtsever}\ \emph {et~al.}(2021)\citenamefont
  {Yurtsever}, \citenamefont {Tropp}, \citenamefont {Fercoq}, \citenamefont
  {Udell},\ and\ \citenamefont {Cevher}}]{SketchySDP}%
  \BibitemOpen
  \bibfield  {author} {\bibinfo {author} {\bibfnamefont {A.}~\bibnamefont
  {Yurtsever}}, \bibinfo {author} {\bibfnamefont {J.~A.}\ \bibnamefont
  {Tropp}}, \bibinfo {author} {\bibfnamefont {O.}~\bibnamefont {Fercoq}},
  \bibinfo {author} {\bibfnamefont {M.}~\bibnamefont {Udell}},\ and\ \bibinfo
  {author} {\bibfnamefont {V.}~\bibnamefont {Cevher}},\ }\bibfield  {title}
  {\bibinfo {title} {Scalable semidefinite programming},\ }\href
  {https://doi.org/10.1137/19M1305045} {\bibfield  {journal} {\bibinfo
  {journal} {SIAM Journal on Mathematics of Data Science}\ }\textbf {\bibinfo
  {volume} {3}},\ \bibinfo {pages} {171} (\bibinfo {year} {2021})}\BibitemShut
  {NoStop}%
\bibitem [{\citenamefont {McKay}\ \emph {et~al.}(2017)\citenamefont {McKay},
  \citenamefont {Wood}, \citenamefont {Sheldon}, \citenamefont {Chow},\ and\
  \citenamefont {Gambetta}}]{mckay2017virtualz}%
  \BibitemOpen
  \bibfield  {author} {\bibinfo {author} {\bibfnamefont {D.~C.}\ \bibnamefont
  {McKay}}, \bibinfo {author} {\bibfnamefont {C.~J.}\ \bibnamefont {Wood}},
  \bibinfo {author} {\bibfnamefont {S.}~\bibnamefont {Sheldon}}, \bibinfo
  {author} {\bibfnamefont {J.~M.}\ \bibnamefont {Chow}},\ and\ \bibinfo
  {author} {\bibfnamefont {J.~M.}\ \bibnamefont {Gambetta}},\ }\bibfield
  {title} {\bibinfo {title} {Efficient z gates for quantum computing},\ }\href
  {https://doi.org/10.1103/PhysRevA.96.022330} {\bibfield  {journal} {\bibinfo
  {journal} {Physical Review A}\ }\textbf {\bibinfo {volume} {96}},\ \bibinfo
  {pages} {022330} (\bibinfo {year} {2017})}\BibitemShut {NoStop}%
\bibitem [{\citenamefont {M{\o}lmer}\ and\ \citenamefont
  {S{\o}rensen}(1999)}]{molmer1999msgate}%
  \BibitemOpen
  \bibfield  {author} {\bibinfo {author} {\bibfnamefont {K.}~\bibnamefont
  {M{\o}lmer}}\ and\ \bibinfo {author} {\bibfnamefont {A.}~\bibnamefont
  {S{\o}rensen}},\ }\bibfield  {title} {\bibinfo {title} {Multiparticle
  entanglement of hot trapped ions},\ }\href
  {https://doi.org/10.1103/PhysRevLett.82.1835} {\bibfield  {journal} {\bibinfo
   {journal} {Physical Review Letters}\ }\textbf {\bibinfo {volume} {82}},\
  \bibinfo {pages} {1835} (\bibinfo {year} {1999})}\BibitemShut {NoStop}%
\bibitem [{\citenamefont {Solano}\ \emph {et~al.}(1999)\citenamefont {Solano},
  \citenamefont {de~Matos~Filho},\ and\ \citenamefont
  {Zagury}}]{solano1999msgate}%
  \BibitemOpen
  \bibfield  {author} {\bibinfo {author} {\bibfnamefont {E.}~\bibnamefont
  {Solano}}, \bibinfo {author} {\bibfnamefont {R.~L.}\ \bibnamefont
  {de~Matos~Filho}},\ and\ \bibinfo {author} {\bibfnamefont {N.}~\bibnamefont
  {Zagury}},\ }\bibfield  {title} {\bibinfo {title} {Deterministic bell states
  and measurement of the motional state of two trapped ions},\ }\href
  {https://doi.org/10.1103/PhysRevA.59.R2539} {\bibfield  {journal} {\bibinfo
  {journal} {Physical Review A}\ }\textbf {\bibinfo {volume} {59}},\ \bibinfo
  {pages} {R2539} (\bibinfo {year} {1999})}\BibitemShut {NoStop}%
\bibitem [{\citenamefont {Trotter}(1959)}]{trotter1959trotter}%
  \BibitemOpen
  \bibfield  {author} {\bibinfo {author} {\bibfnamefont {H.~F.}\ \bibnamefont
  {Trotter}},\ }\bibfield  {title} {\bibinfo {title} {On the product of
  semi-groups of operators},\ }\href {https://doi.org/10.2307/2033649}
  {\bibfield  {journal} {\bibinfo  {journal} {Proceedings of the American
  Mathematical Society}\ }\textbf {\bibinfo {volume} {10}},\ \bibinfo {pages}
  {545} (\bibinfo {year} {1959})}\BibitemShut {NoStop}%
\bibitem [{\citenamefont {Hatano}\ and\ \citenamefont
  {Suzuki}(2005)}]{hatano2005trotter}%
  \BibitemOpen
  \bibfield  {author} {\bibinfo {author} {\bibfnamefont {N.}~\bibnamefont
  {Hatano}}\ and\ \bibinfo {author} {\bibfnamefont {M.}~\bibnamefont
  {Suzuki}},\ }\bibfield  {title} {\bibinfo {title} {Finding exponential
  product formulas of higher orders},\ }in\ \href
  {https://doi.org/10.1007/11526216_2} {\emph {\bibinfo {booktitle} {Quantum
  annealing and other optimization methods}}},\ Vol.\ \bibinfo {volume} {679}\
  (\bibinfo  {publisher} {Springer Berlin Heidelberg},\ \bibinfo {year}
  {2005})\ pp.\ \bibinfo {pages} {37--68}\BibitemShut {NoStop}%
\bibitem [{\citenamefont {Efthymiou}\ \emph {et~al.}(2021)\citenamefont
  {Efthymiou}, \citenamefont {Ramos-Calderer}, \citenamefont {Bravo-Prieto},
  \citenamefont {P{\'{e}}rez-Salinas}, \citenamefont
  {Garc{\'{\i}}a-Mart{\'{\i}}n}, \citenamefont {Garcia-Saez}, \citenamefont
  {Latorre},\ and\ \citenamefont {Carrazza}}]{qibo_paper}%
  \BibitemOpen
  \bibfield  {author} {\bibinfo {author} {\bibfnamefont {S.}~\bibnamefont
  {Efthymiou}}, \bibinfo {author} {\bibfnamefont {S.}~\bibnamefont
  {Ramos-Calderer}}, \bibinfo {author} {\bibfnamefont {C.}~\bibnamefont
  {Bravo-Prieto}}, \bibinfo {author} {\bibfnamefont {A.}~\bibnamefont
  {P{\'{e}}rez-Salinas}}, \bibinfo {author} {\bibfnamefont {D.}~\bibnamefont
  {Garc{\'{\i}}a-Mart{\'{\i}}n}}, \bibinfo {author} {\bibfnamefont
  {A.}~\bibnamefont {Garcia-Saez}}, \bibinfo {author} {\bibfnamefont {J.~I.}\
  \bibnamefont {Latorre}},\ and\ \bibinfo {author} {\bibfnamefont
  {S.}~\bibnamefont {Carrazza}},\ }\bibfield  {title} {\bibinfo {title} {Qibo:
  a framework for quantum simulation with hardware acceleration},\ }\href
  {https://doi.org/10.1088/2058-9565/ac39f5} {\bibfield  {journal} {\bibinfo
  {journal} {Quantum Science and Technology}\ }\textbf {\bibinfo {volume}
  {7}},\ \bibinfo {pages} {015018} (\bibinfo {year} {2021})}\BibitemShut
  {NoStop}%
\bibitem [{\citenamefont {{The Qibo team}}(2023)}]{qibo_software}%
  \BibitemOpen
  \bibfield  {author} {\bibinfo {author} {\bibnamefont {{The Qibo team}}},\
  }\href {https://doi.org/10.5281/zenodo.8093455} {\bibinfo {title}
  {qiboteam/qibo: Qibo 0.1.15}} (\bibinfo {year} {2023})\BibitemShut {NoStop}%
\bibitem [{\citenamefont {Ramos-Calderer}(2023)}]{github_repo_sergi}%
  \BibitemOpen
  \bibfield  {author} {\bibinfo {author} {\bibfnamefont {S.}~\bibnamefont
  {Ramos-Calderer}},\ }\href
  {https://github.com/igres26/quantum-bigM-trotterization} {\bibinfo {title}
  {quantum-bigm-trotterization}} (\bibinfo {year} {2023}),\ \bibinfo {note}
  {https://github.com/igres26/quantum-bigM-trotterization}\BibitemShut
  {NoStop}%
\bibitem [{\citenamefont {Alessandroni}(2023)}]{github_repo_edo}%
  \BibitemOpen
  \bibfield  {author} {\bibinfo {author} {\bibfnamefont {E.}~\bibnamefont
  {Alessandroni}},\ }\href
  {https://github.com/EdoardoAlessandroni/qubo\_mapper} {\bibinfo {title}
  {qubo\_mapper}} (\bibinfo {year} {2023}),\ \bibinfo {note}
  {https://github.com/EdoardoAlessandroni/qubo\_mapper}\BibitemShut {NoStop}%
\end{thebibliography}%

\section*{Competing interests}
The Authors declare no competing financial or non-financial interests.

\section*{Author contributions}
EA implemented the numerics. SR conducted the experimental deployment on IonQ.  EA and SR derived the analytical scaling of the gap. IR proved the NP-hardness of finding the optimal big-M.  IR, ET, and LA conceived the project and provided guidance in all steps.  All authors contributed to the conception and write-up of the paper.

\appendix
\renewcommand\thefigure{\thesection.\arabic{figure}}    
\setcounter{figure}{0}   
\onecolumngrid
\clearpage

\section*{Supplementary Information}

\section{Gadgetization: from a general quadratically constrained quadratic optimization problem with integer variables to a linearly-constrained binary quadratic optimization problem}\label{app:gadgets}

We present here the steps of a general procedure, often called \emph{gadgetization}, consisting of elementary operations on the structure of a general combinatorial optimization problem, with the goal of reducing it to a simpler form, namely a quadratic binary problem with linear constraints. Let us consider a problem with quadratic objective function, integer decision variables and quadratic constraints which appear both with an equality and with an inequality condition. Notice that the mentioned formulation can also model polynomial functions, as one can map monomial terms with order greater than two to order two monomials, by adding additional variables to the model. This procedure is similar to what will be shown for the linearization of constraints. Such a problem will have the following form:
\begin{align}
        (P_0) ~~\min_\mathbf{y} \;\;& \mathbf{y}^tQ\mathbf{y} + \mathbf{L}^t\mathbf{y} \\
        \text{s.t.} \; &\mathbf{y}^tq_i\mathbf{y} + \mathbf{l_i}^t\mathbf{y} = b_i\;\; i=1,\dots,m_e \\
        &\mathbf{y}^t\tilde{q}_i\mathbf{y} + \mathbf{\tilde{l}_i}^t\mathbf{y} \ge \tilde{b}_i\;\; i=1,\dots,m_i \label{inequality}  \\
        &\mathbf{y} \in \mathbb{Z}^n  \label{x_integer}\\
        &0 \le y_i \le U_i  \label{x_domain}\;\; i=1,\dots,n
\end{align}
where $m_e$ and $m_i$ are, respectively, the number of equality and inequality constraints and the $n$ integer variables $y_i$ can have values in a finite set as they are upper bounded by some constants $U_i$. Among the model parameters, $Q$, $q_i$ and $\tilde{q}_i$ are $n \times n$ integer valued matrices and $\mathbf{L}$, $\mathbf{l_i}$ and $\mathbf{\tilde{l}_i}$ are $n$-dimensional integer vectors. Notice that integer variables problems are often used to approximate real variables models. In such cases, by increasing the range that the integer variables span, it is possible to reach the desired correspondence between the discretized problem and the continuous one. As an example of this, the discretized Markowitz model in Portfolio Optimization has been analyzed in the present work. 

In the rest of this section we present a four-step procedure that allows one to rewrite a problem in the form $(P_0)$ as a linearly-constrained binary quadratic optimization problem.

\textit{Step 1: rewriting the inequalities as equalities.} In order to deal with equality constraints only, we first need to rewrite the inequalities as equalities, by using additional slack variables $z_i$ that compensate for the deviation between the two sides of the inequality. In this way, Eq.~\eqref{inequality} will become
\begin{align}
        &\mathbf{y}^t\tilde{q}_i \mathbf{y} + \mathbf{\tilde{l}_i}^t\mathbf{y} - z_i = \tilde{b}_i\;\; i=1,\dots,m_i \\
        & 0 \le z_i \le u_i \label{new_inequality} \\ 
        &z_i \in \mathbb{Z}
    \end{align}
where $u_i = \max\{\mathbf{y}^t\tilde{q}_i\mathbf{y} + \mathbf{\tilde{l}_i}^t\mathbf{y} - \tilde{b}_i \; : \eqref{x_integer}, \eqref{x_domain}\}$. Observe that, although Eq.~\eqref{new_inequality} may look like an inequality of the previous kind, it is actually much easier to deal with, since it does not relate variables among themselves. Rather, it only specifies the possible values every single integer variable can assume.

\textit{Step 2: binary expansion of the integer variables.} 
 Since the final formulation will have to contain binary decision variables only, a binary expansion of the integer variables is required, meaning that every integer variable $y_i$ will be replaced by a set of $\lceil log_2(U_i) \rceil$ binary variables $x^i_k$:
\begin{equation}
        y_i = \sum_{k=1}^{\lceil log_2(UB_i)\rceil} \alpha_kx^i_k.
\end{equation}
For the sake of clarity, in the following the superscript $i$ will be dropped from $x_k^i$, as indicating to which integer variable every bit correspond to is not relevant in the present context. As for the expansion coefficients $\alpha_k$ there exist many possible schemes, but here we pick a common one, called \emph{binary encoding}, where $\alpha_k = 2^{k-1}$.

\textit{Step 3: substitution of the quadratic terms in the quadratic equations.}  In order to deal with linear constraints only, we need to get rid of quadratic monomials of the form $x_ix_j$. It is possible to do so by defining a new binary variable 
\begin{equation}
    w_{ij} = x_ix_j  \label{linearization} 
\end{equation}
and substitute every appearance of such monomial with $w_{ij}$ as a new decision variable. To enforce Eq.~\eqref{linearization} it is enough to add to the objective function the term
\begin{equation}
    p(3w_{ij} + x_ix_j - 2x_iw_{ij} - 2x_jw_{ij}). 
\end{equation}
Such a term adds a penalty factor $\ge p$ only when Eq.~\eqref{linearization} is not satisfied, therefore picking $p$ sufficiently large is equivalent to enforcing the substitution and therefore the linearization of the constraints overall.

\textit{Step 4: shifting the linear term in the objective function.}
In a model with only binary variables, it is possible to exploit the equivalence $x_i^2=x_i$ to rewrite the objective function as
\begin{align}
    \mathbf{x}^tQ\mathbf{x} + \mathbf{L}^t\mathbf{x} = 
    \mathbf{x}^t(Q + Diag(L)) \mathbf{x}.
\end{align}
The term $Diag(L)$ indicates a square matrix with all the off-diagonal terms equal to zero, and the diagonal terms equal to $L$. After the application of these four steps, we successfully recover the problem formulation in \eqref{lcBQP}, the starting point in this present work.

\section{SDP Relaxation}
\label{app:sdp}

The following optimization problem 
\begin{align}
     (A_0) ~~&\min_{\mathbf{x}} \mathbf{x}^t Q \mathbf{x} + \mathbf{L}^t \mathbf{x}  \\
     &\text{s.t.} \;  \mathbf{x} \in \{0,1\}^n
\end{align}
can be easily rewritten as
\begin{align}
     \min_{\mathbf{x}} &\Tr\{Y(\mathbf{x})^t\tilde{Q}\}  \\
     \text{s.t.} \; &Y = {1 \choose \mathbf{x}}{1 \choose \mathbf{x}}^t \label{prod betw x}\\
     & \mathbf{x} \in \{0,1\}^n. \label{x binary}
\end{align}
whereby ${1 \choose \mathbf{x}}$ we denote the $n+1$ dimensional column vector with $1$ as the first entry and $\mathbf{x}$ as the remaining $n$ entries, while both $Y$ and $\tilde{Q}$ are $(n+1) \times (n+1)$ matrices and $ Tr\{A^tB\} = \langle A,B\rangle = \sum_{ij} A_{ij}B_{ij}$ denotes the inner product between matrices. In particular, the structure of the objective function is encapsulated in
\begin{equation}
\tilde{Q} = \begin{bmatrix} 
0 & \frac{1}{2}L^t \\
\frac{1}{2}L & Q
\end{bmatrix}.
\end{equation}
The problem can be equivalently reformulated it in the context of convex optimization as
\begin{align}
     (A_1) ~~\min_{Y} &\Tr\{Y^t\tilde{Q}\}  \\
     \text{s.t.} \; &Y \ge 0 \label{Y positive}\\
     &\text{rank}(Y) = 1 \label{Y rank}\\
     & Y_{11} = 1 \label{Y_11=1}\\
     & Y_{1i} =Y_{ii} \;\forall i=2,\dots,n+1 \label{Y binary}
\end{align}
where we replaced condition \eqref{prod betw x} with \eqref{Y positive}, \eqref{Y rank} and \eqref{Y_11=1}, while condition \eqref{x binary} becomes equivalent to \eqref{Y binary}, since $x_i = x_i^2 \iff x_i \in \{0,1\}$.

Problem $(A_1)$ is thus equivalent to $(A_0)$ and its solution space can be viewed as a subset of the positive semi-definite matrices space, rather than the previous vector space.
By removing constraints \eqref{Y rank} and \eqref{Y_11=1} we obtain formulation $(A_2)$, also called \textit{SDP Relaxation}, because it is a relaxation of formulation $(A_0)$ that consists of optimizing over the cone of semidefinite matrices intersected with linear constraints:
\begin{align}
    (A_2) ~~ f^* \coloneqq \min_Y &\Tr\{Y^t\tilde{Q}\}\\
    \text{s.t.} \;  &Y \ge 0\\
    &Y_{1i} =Y_{ii} \quad \forall i=2,\dots,n+1 \\
    &  Y_{ij} \in [0,1] \;\forall i,j=1,\dots,n+1 \label{real_entries}
\end{align}
 In the formulation of $(A_2)$ an additional set of constraints \eqref{real_entries} enforcing to have real entries in the interval $[0,1]$ is added to obtain stronger formulation.
Clearly, solving $(A_2)$ provides a lower bound for $(A_0)$, as $f^* \le \min_{\mathbf{x}} \mathbf{x}^t Q \mathbf{x} + \mathbf{L}^t \mathbf{x}$.

\section{Simulation of Trotterized adiabatic evolution for problem instances of 6, 12 and 18 qubits}\label{app:simulation}

As an extension to the experimental implementation on IonQ's ion-trap quantum computer we present here evidence that the observed behavior is consistent with increased problem size.
Using the same techniques detailed in Sec. \ref{app:experimental} we extend the instances of Trotterized adiabatic evolution to 12 and 18 qubits, as well as include results for LCBOs and SPP. No longer bound by experimental limitations, we can expand the decomposition parameters. In particular, we keep final evolution time of $100$ for 6 qubits, and increase it to $400$ and $1000$ for 12 and 18 qubits respectively for PO and LCBO, and $400$, $800$ and $2000$ for SPP. The Trotterization time-step $\Delta t$ is fixed to $4$ for all instances.

In Fig.~\ref{fig:simulation_trotter} we show these results. We present the probability of measuring the solution after the evolution for reformulations with $M_\text{\rm SDP}$ and with $M_{\ell_1}$. 
We additionally plot the approximation ratio of taking $10^4$ samples on the final state. 
We observe very similar results as seen in Fig.~\ref{Result_plots} (f) and (g) with the increase in problem size and for different benchmarked models. 
Also for higher system sizes, the $M_\text{\rm SDP}$ reformulation 
yields a significant probability of observing the optimal result, while the $M_{\ell_1}$ reformulations yield comparable results to a randomized searched.

\end{document}